\documentclass[journal,draftclsnofoot,onecolumn]{IEEEtran}
\usepackage{amsmath}
\usepackage{amssymb}
\usepackage{amsthm}
\usepackage{color}
\usepackage{url}
\usepackage{comment}
\usepackage{eucal}
\usepackage{xcolor}
\usepackage{mathrsfs}
\usepackage{cite}
\usepackage{placeins}

\usepackage{caption} %
\usepackage{algorithm}
\usepackage{algpseudocode}
\usepackage{graphicx}  
\usepackage{float}
\usepackage{subfigure}
\usepackage{mathtools}
\DeclarePairedDelimiter{\set}{\{}{\}}

\usepackage{amsfonts,bm}
\renewcommand{\mathbf}[1]{{\bm{#1}}}
\renewenvironment{proof}{\begin{IEEEproof}}{\end{IEEEproof}}


\DeclarePairedDelimiter\norm{\lVert}{\rVert}
\DeclarePairedDelimiter\ceil{\lceil}{\rceil}

\newcommand{\cC}{{\mathcal{C}}}
\newcommand{\cX}{{\mathcal{X}}}

\newcommand{\Sph}{\mathbb{S}}
\renewcommand{\le}{\leqslant}
\renewcommand{\ge}{\geqslant}

\theoremstyle{plain}
\newtheorem{theorem}{\indent Theorem}

\newtheorem{lemma}[theorem]{\indent Lemma}
\newtheorem{proposition}[theorem]{\indent Proposition}

\newtheorem{construction}{Construction}

\newtheorem{problem}{Problem}

\theoremstyle{definition}

\theoremstyle{remark}
\newtheorem{remark}{\indent Remark}

\newcommand{\R}{\mathbb{R}}
\newcommand{\Z}{\mathbb{Z}}
\newcommand{\vol}{\operatorname{vol}}

\newcommand{\bbS}{\mathbb{S}}
\newcommand{\ip}[2]{\langle #1,#2\rangle}

\newcommand{\eqdef}{\triangleq}

\newcommand{\bldm}{{\mathbf{m}}}

\newcommand{\bldh}{{\mathbf{h}}}

\newcommand{\blds}{{\mathbf{s}}}
\newcommand{\bldu}{{\mathbf{u}}}
\newcommand{\bldv}{{\mathbf{v}}}
\newcommand{\bldx}{{\mathbf{x}}}
\newcommand{\bldy}{{\mathbf{y}}}

\newcommand{\Int}[1]{{\left[{#1}\right]}}

\allowdisplaybreaks
\begin{document}
\date{}
\title{Sharp Bounds and New Constructions for Single-Error Detection and Correction in Analog Codes} 	

\author{Hengzhuo Li, Zhengjie Jian, Xin Wang, and Hengjia Wei %
\thanks{This work was supported in part by the National Natural Science Foundation of China under Grant 12371523.}
\thanks{H. Li (leeker0626@outlook.com) and H. Wei (hjwei05@gmail.com) are with the School of Mathematics and Statistics, Xi'an Jiaotong University, Xi'an 710049, China. Z. Jian (xc169130@gmail.com) and X. Wang (xinw@suda.edu.cn) are with the Department of Mathematics, Soochow University, Suzhou 215005, Jiangsu, China.}

}

\maketitle

\begin{abstract}
We study single-error detection and correction for analog codes over $\R$. The key performance measures are the parameters $\Gamma_1(\cC)$ and $\Gamma_2(\cC)$, which quantify, respectively, the minimum separation required between large outlying errors that must be detected or located and the magnitude of tolerable perturbations. First, we prove that every real linear $[n,k]$ code $\cC$ satisfies
\[
\Gamma_1(\cC)\ge 2\left\lceil\frac{n}{n-k}\right\rceil.
\]
Moreover,  when $k=n-2$, we prove that every real linear $[n,n-2]$ code $\cC$ satisfies
\[
\Gamma_2(\cC)\ge \frac{1}{\sin^2(\pi/2n)}.
\]
Together, these two lower bounds settle all four open problems of Roth concerning the optimality of single-error-detecting and single-error-correcting analog codes. 
The proof of the first bound is based on a double-induction argument, while the proof of the second combines a zonotope-based geometric characterization of $\Gamma_2(\cC)$ with a cyclic sine-product inequality. In addition, we construct analog codes with higher fixed redundancy and show that, for every fixed $r\ge 2$, there exists a class of linear $[n,\ge n-r]$ codes over $\R$ such that
\[
\Gamma_2(\cC)\le O\left(n^{1+\frac{1}{r-1}}\right).
\]
This gives a new upper bound in the fixed-redundancy regime, which was not covered by previously known constructions.
\end{abstract}

\begin{IEEEkeywords}
Fault-tolerant computing, Linear codes over the real field,
Vector--matrix multiplication,
Sine-product inequality
\end{IEEEkeywords}

   \section{Introduction}
\label{sec:introduction}

Analog computation has become an increasingly important paradigm in modern information processing systems, 
particularly in applications such as machine learning, signal processing, and in-memory computing~\cite{Zhaohan2026,NegiShubham2024,Chung2019,Guo2016,Yao2020MemristorCNN,Roth2022ITW}. 
In these settings, vector--matrix multiplication over the real field is a fundamental operation~\cite{Bucolo2021,SunZhong2022,HongMan2023,Zuo2025}. 
Unlike digital computation, however, analog computation is inherently affected by numerical perturbations, 
device imperfections, and occasional large-magnitude outliers. 
This motivates the study of error-correcting mechanisms over the real field that can distinguish small tolerable perturbations from large computational errors.

Roth introduced the framework of analog error-correcting codes for approximate real vector--matrix multiplication~\cite{ref1,ref2,ref3}. 
In this model, a linear code $\cC\subseteq \R^n$ is used to protect analog computation against two types of errors: 
small-magnitude errors whose entries are bounded by a tolerance level $\delta$, and outlying errors whose magnitudes exceed a larger threshold $\Delta$. 
The decoding goal is to locate a prescribed number of outlying errors while remaining robust to bounded tolerable perturbations. 
Thus, the performance of such a code is governed not only by the number of correctable or detectable outliers, 
but also by the required separation between the two error scales, namely the ratio $\Delta/\delta$. 
We refer to this ratio as the error-separation ratio.

For a linear code $\cC\subseteq \R^n$, Roth characterized the smallest admissible value of this ratio through a parameter $\Gamma_m(\cC)$, 
where $m=2\tau+\sigma$ is determined by the number $\tau$ of correctable errors and the number $\sigma$ of additional detectable errors. 
In other words, $\Gamma_m(\cC)$ represents the minimum error-separation ratio required for the corresponding analog decoding task.
Given the block length $n$, redundancy $r$, and parameter $m$, it is natural to ask how small $\Gamma_m(\cC)$ can be among all real linear $[n,n-r]$ codes. 
We denote this minimum value by $\Gamma_m(n,n-r)$.

In~\cite{ref1,ref2}, Roth proposed several coding schemes for single-error detection and single-error correction. Among other results, he showed that \[\Gamma_1(n,k)\le 2 \left\lceil\frac{n}{n-k}\right\rceil \textrm{ for every } 0<k<n,\]
and 
\[\Gamma_2(n,n-2)\le\frac{1}{\sin^2(\pi/2n)} \textrm{ for every } n>2.\]
However, it was left open whether these two upper bounds are tight. 
This led Roth to formulate four related problems~\cite[Problems~1--4]{ref1}. 
The first three problems concern single-error detection, while the fourth concerns single-error correction. 
Very recently, Jiang et al.~\cite{jiang2026tightlowerbound} solved \cite[Problem~2]{ref1}, and partially solved \cite[Problem~1]{ref1} under the condition that $n-k$ divides $k$.

In this paper, we focus on single-error detection and
correction, and solve all four open problems of Roth. We first prove that every real linear $[n,k]$ code $\cC$ satisfies
\[
\Gamma_1(\cC)\ge 2\left\lceil\frac{n}{n-k}\right\rceil.
\]
This solves Roth's first three problems~\cite[Problems~1--3]{ref1}. Compared with the recent work~\cite{jiang2026tightlowerbound}, our result applies in full generality, without any additional assumptions, and is obtained through a substantially simpler proof.

Moreover, we prove that every linear $[n,n-2]$ code $\cC$ over $\R$ with $n\ge 3$ satisfies
\[
\Gamma_2(\cC)\ge \frac{1}{\sin^2\frac{\pi}{2n}}.
\]
This gives an affirmative answer to \cite[Problem~4]{ref1}, and it follows that
\[
\Gamma_2(n,n-2)=\frac{1}{\sin^2\frac{\pi}{2n}}.
\]

We also study upper bounds for higher redundancy. 
Recently, Song and Cai~\cite{Song2026analog} constructed a class of linear $[n,n-3]$ codes satisfying
\[
\Gamma_2(\cC)\le
\frac{2n}{\sin \frac{\pi}{2\lceil\sqrt{(n-1)/2}\rceil}}.
\]
In this paper, we extend this direction by showing that, for every fixed integer $r\ge 2$, there exists a class of linear $[n,\ge n-r]$ codes over $\R$ such that
\[
\Gamma_2(\cC) \le O\left(n^{1+\frac{1}{r-1}}\right).
\]
It is worth noting that this exponent is consistent with the known low-redundancy constructions. 
For $r=2$, Roth's construction has order
\[
\frac{1}{\sin^2(\pi/2n)}=\Theta(n^2),
\]
which corresponds to the exponent $1+1/(r-1)=2$. 
For $r=3$, our bound gives order $O(n^{3/2})$, matching the order of the construction in~\cite{Song2026analog}. 
Thus, the bound above can be viewed as a higher-redundancy extension of the known constructions for $r=2$ and $r=3$.

We also briefly discuss related work. Roth~\cite{ref1} constructed a class of real linear codes satisfying $ \Gamma_2(\cC)\le 2\left\lceil \frac{2n}{r}\right\rceil $ for every $r$ such that $r(r-1)\ge n$. In the same paper and in~\cite{Roth2022ITW}, he proposed a construction based on spherical codes, which yields an infinite family of analog codes with $\Gamma_2(\cC)=O(n/\sqrt r) $ when $r=\Theta(\log n)$. 
Later, it was shown in~\cite{MyAdv} that the codes arising from this construction can also handle multiple errors. The same work further presented a construction of analog multiple-error-correcting codes based on sparse disjunct matrices. In the case $m=2$, this construction gives a class of linear codes satisfying $ \Gamma_2(\cC)\le \frac{2(\ell+1)n}{r}$, when $r=\Theta\bigl(n^{1/(\ell+1)}\bigr). $ More recently, geometric codes capable of handling multiple outliers were proposed and studied in~\cite{Roth2026,Zhu2026ANC}. In addition, algorithms for efficiently computing the $m$-height $h_m(\cC)$, which is related to $\Gamma_m(\cC)$ by $\Gamma_m(\cC)=2h_m(\cC)+2, $ were studied in~\cite{JiangAnxiao}, and several characterizations of $m$-heights were given in~\cite{Roth2026}. The known upper bounds on $\Gamma_2(n,n-r)$ for different ranges of $r$ are summarized in Table~\ref{tab:summary}.

\begin{table*}[ht]
  \caption{Upper Bounds on $\Gamma_2(n,n-r)$ for Different Ranges of $r$}
  \label{tab:summary}
\centering
  {\renewcommand{\arraystretch}{2.5}
   \everymath={\displaystyle}
  \begin{tabular}{cccl}
    \hline\hline
    Redundancy $r$     & $\Gamma_2(\cC)$  &Comments       & Reference\\
    \hline
    2
    & $\frac{1}{\sin^2\frac{\pi}{2n}}=O(n^2)$  
    &
    &  Proposition~11 in~\cite{ref1}                    \\
    3
    &$\frac{2n}{\sin \frac{\pi}{2\lceil\sqrt{(n-1)/2}\rceil}}=O(n\sqrt{n})$
    &
    &Theorem~3 in \cite{Song2026analog}\\
    Fixed $r\ge2$
    &$\frac{2n}{\sin\left(c_{r-1}n^{-1/(r-1)}\right)}=O(n^{1+\frac{1}{r-1}})$
    &Existence, $c_{r-1}
=
\left(\frac{(r-1)|\bbS^{r-1}|}{|\bbS^{r-2}|}\right)^{\frac{1}{r-1}}$
    & Theorem~\ref{thm:existence-explicit}\\
    Fixed $r\ge2$
    &$4n\left\lceil \left(\frac{n}{\kappa_{r-1}}\right)^{1/(r-1)}+\frac{\sqrt{r-1}}{2}\right\rceil$
    &$\kappa_d=\frac{\pi^{d/2}}{\Gamma(d/2+1)}$
    &Construction~\ref{cons:ball-grid}\\
    $\Theta(\log n)$
    & $O\left(\frac{n}{\sqrt{r}}\right)$ 
    &
    &  Proposition~5 in~\cite{Roth2022ITW}       \\
    $(\ell+1)q=\Theta(n^\frac{1}{\ell+1})$
    &$\frac{2(\ell+1)n}{r}$
    &$2 \le \lceil{q/\ell}\rceil$,
    $\ell \in \Z^+$, $q$ is a prime power, $n = q^{\ell+1}$
    &Corollary~16 in~\cite{MyAdv}\\
    $r\le n\le r(r-1)$
    & $2\left\lceil\frac{2n}{r}\right\rceil$
    &
    & Proposition~6 in~\cite{ref1}                             \\
    $n-2$
    &$2\frac{\cos\frac{\pi}{2n}}{\cos\frac{3\pi}{2n}}+2=O(1)$
    &
    &Theorem~III.2 in~\cite{Zhu2026ANC}\\
    \hline\hline
  \end{tabular}
  }
\end{table*}

The remainder of this paper is organized as follows.
Section~\ref{sec:pre} introduces the necessary preliminaries and defines the error-separation ratio $\Delta/\delta$ for analog codes.
Section~\ref{sec:detection} proves a sharp lower bound on $\Gamma_1(\mathcal C)$ for single-error detection.
Section~\ref{sec:cal} derives an explicit formula for $\Gamma_2(\mathcal C)$ in the redundancy-two case and establishes the corresponding lower bound for all $n\ge 3$.
The proof of this lower bound relies on a cyclic sine-product inequality, which is proved in Section~\ref{sec:proof}.
Finally, Section~\ref{sec:upperbound} presents upper bounds for analog codes with redundancy $r\ge 2$.

\section{Preliminaries}\label{sec:pre}
	For integers $\ell \le n$, we denote by $[\ell:n]$ the integer subset
$\set*{z \in \Z \,:\, \ell \le z < n}$.
We will use the shorthand notation $\Int{n}$ for $\Int{0:n}$,
and we will typically use
$\Int{n}$ to index the entries of vectors in $\R^n$.
	
	Given \(\delta, \Delta \in \mathbb{R}^+\), let
	\begin{equation}
			Q(n, \delta) 
			\triangleq 
			\{\boldsymbol{\epsilon} = (\epsilon_j) \in \mathbb{R}^n : \|\boldsymbol{\epsilon}\|_{\infty} \le \delta\}\nonumber
	\end{equation}
	be the set of all tolerable error vectors with threshold \(\delta\), where \(\|\boldsymbol{\epsilon}\|_{\infty}\) stands for the infinity norm, namely, \(\|\boldsymbol{\epsilon}\|_{\infty}= \max_{j \in [n]} |\epsilon_j|\). 
    We write $\|\cdot\|$ for the Euclidean norm.
    For \(\boldsymbol{e} = (e_j)_{j} \in \mathbb{R}^n\), let
	\[
	\text{Supp}_{\Delta}(\boldsymbol{e}) \triangleq \{j \in [n] : |e_j| > \Delta\}.
	\]
	
	In particular, \(\text{Supp}_0(\boldsymbol{e})\) is the ordinary support of \(\boldsymbol{e}\). We use \(w(\boldsymbol{e})\) to denote the Hamming weight of \(\boldsymbol{e}\). The set of all vectors of Hamming weight at most \(w\) in \(\mathbb{R}^n\) is denoted by \(B(n, w)\).
	
	Let \(\cC\) be a linear \([n, k]\) code over \(\mathbb{R}\). A decoder for \(\cC\) is a function \(D : \mathbb{R}^n \to 2^{[n]} \cup \{\text{``e''}\}\) which returns a set of locations of outlying errors or an indication ``e" that errors have been detected. Given \(\delta, \Delta \in \mathbb{R}^+\) and prescribed nonnegative integers \(\tau\) and \(\sigma\), we say that the decoder \(D\) corrects \(\tau\) errors and detects \(\sigma\) additional errors with respect to the threshold pair \((\delta, \Delta)\) if the following conditions hold for every 
	\[
	\boldsymbol{y} = \boldsymbol{c} + \boldsymbol{\epsilon} + \boldsymbol{e} \in \mathbb{R}^n,
	\]
	where \(\boldsymbol{c} \in C\), \(\boldsymbol{\epsilon} \in Q(n, \delta)\), and \(\boldsymbol{e} \in B(n, \tau + \sigma)\).
	
	\begin{itemize}
		\item[(D1)] If \(\boldsymbol{e} \in B(n, \tau)\), then \(D(\boldsymbol{y}) \neq \text{``e''} \subseteq \text{Supp}_0(\boldsymbol{e})\).
		\item[(D2)] If \(D(\boldsymbol{y}) \neq \text{``e''}\), then \(\text{Supp}_\Delta(\boldsymbol{e}) \subseteq D(\boldsymbol{y})\).
	\end{itemize}
	
	Let \(\boldsymbol{x} = (x_j)_{j \in [n]}\) be a nonzero vector in \(\mathbb{R}^n\) and let \(\pi\) be a permutation on \([n]\) such that
	\[
	|x_{\pi(0)}| \ge |x_{\pi(1)}| \ge \cdots \ge |x_{\pi(n-1)}|.
	\]
    
	Given an integer \(m \in [n]\), the \(m\)-height of \(\boldsymbol{x}\), denoted by \(h_m(\boldsymbol{x})\), is defined as
	\[
	h_m(\boldsymbol{x}) \triangleq \left|\frac{x_{\pi(0)}}{x_{\pi(m)}}\right|,
	\]
	and we formally define \(h_n(\boldsymbol{x}) \triangleq \infty\). For a linear code \({\cC} \neq \{0\}\) over \(\mathbb{R}\), its \(m\)-height, denoted by \(h_m({\cC})\), is defined by
	\[
	h_m({\cC}) \triangleq \max_{\boldsymbol{c} \in {\cC} \setminus \{0\}} h_m(\boldsymbol{c}).
	\]
	The minimum Hamming distance of \({\cC}\), denoted by \(d({\cC})\), can be related to \(h_m({\cC})\) by
	\begin{equation}\label{eq:dC}
			d({\cC}) = \min \{ m \in [n+1] : h_m({\cC}) = \infty \}. 
	\end{equation}

	\begin{theorem}[\cite{MyAdv,ref1}]\label{refRoth}
		Let \({\cC}\) be a linear \([n, k]\) code over \(\mathbb{R}\). There is a \((\tau, \sigma)\)-decoder for \(({\cC}, \Delta : \delta)\), if and only if
		\[
		\frac{\Delta}{\delta} \ge 2h_{2\tau + \sigma}({\cC}) + 2.
		\]
	\end{theorem}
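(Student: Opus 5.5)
We prove the two directions separately, writing $m=2\tau+\sigma$ and $h=h_m(\cC)$. The sufficiency direction is a construction: we exhibit a decoder that works whenever $\Delta/\delta\ge 2h+2$. The necessity direction is an adversarial argument: whenever $\Delta/\delta<2h+2$, we build two received words that force every decoder to violate (D1) or (D2).

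\textbf{Sufficiency.} Take the decoder $D$ that, on input $\bldy$, looks for a codeword $\hat{\mathbf{c}}\in\cC$ with
\[
\bldy-\hat{\mathbf{c}}=\hat{\boldsymbol{\epsilon}}+\hat{\mathbf e},\qquad \|\hat{\boldsymbol\epsilon}\|_\infty\le\delta,\quad w(\hat{\mathbf e})\le\tau .
\]
If such a $\hat{\mathbf c}$ exists, $D$ outputs $\mathrm{Supp}_0(\hat{\mathbf e})$, chosen as small as possible; otherwise it outputs ``e''.

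For (D1), the true decomposition is itself a candidate, so $D$ never outputs ``e''. We then need the output to lie inside $\mathrm{Supp}_0(\mathbf e)$, which requires a careful choice among competing candidates. A cleaner option is to choose $\hat{\mathbf e}$ supported only on coordinates where every admissible candidate has a large entry. If that gets messy, follow Roth and threshold at $\Delta/2$ instead.

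For (D2), take any candidate and form $\mathbf x=\mathbf c-\hat{\mathbf c}\in\cC$. Then
\[
\mathbf x=(\hat{\boldsymbol\epsilon}-\boldsymbol\epsilon)+(\hat{\mathbf e}-\mathbf e),
\]
so $\mathbf x$ has $\infty$-norm at most $2\delta$ outside a set of at most $2\tau+\sigma=m$ coordinates. If some $j\in\mathrm{Supp}_\Delta(\mathbf e)$ is not in $\mathrm{Supp}_0(\hat{\mathbf e})$, then $|x_j|>\Delta-2\delta$. Meanwhile the $(m+1)$-th largest entry of $\mathbf x$ in absolute value is at most $2\delta$. This gives
\[
h_m(\mathbf x)>\frac{\Delta-2\delta}{2\delta}\ge h,
\]
contradicting the definition of $h$; the degenerate case $\mathbf x=0$ is handled separately. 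So $\mathrm{Supp}_\Delta(\mathbf e)\subseteq D(\bldy)$.

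\textbf{Necessity.} Suppose $\Delta<(2h+2)\delta$. Pick $\mathbf x\in\cC$ attaining $h_m(\mathbf x)=h$; the maximum exists by compactness after normalizing. Scale $\mathbf x$ so that $|x_{\pi(m)}|=2\delta$, hence $|x_{\pi(0)}|=2h\delta$. Split the $m$ largest coordinates into a set $T$ of size $\tau$ and a set $S$ of size $\tau+\sigma$, with $\pi(0)\in S$. Choose $\boldsymbol\epsilon,\boldsymbol\epsilon'$ with $\boldsymbol\epsilon-\boldsymbol\epsilon'$ equal to $\mathbf x$ on the remaining coordinates; this is possible because those entries have absolute value at most $2\delta$. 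Shift $\pm\delta$ on coordinate $\pi(0)$ so that the error there exceeds $\Delta$ in magnitude, which is possible since $\Delta<2h\delta+2\delta$. This yields one received word $\bldy$ with two explanations:
\begin{itemize}
\item $\bldy=\mathbf 0+\boldsymbol\epsilon+\mathbf e$, where $\mathbf e$ is supported on $T$ and has weight at most $\tau$;
\item $\bldy=\mathbf x+\boldsymbol\epsilon'+\mathbf e'$, where $\mathbf e'$ is supported on $S$ and $\pi(0)\in\mathrm{Supp}_\Delta(\mathbf e')$.
\end{itemize}
(D1) applied to the first explanation forces $D(\bldy)\ne$``e'' and $D(\bldy)\subseteq T$. (D2) applied to the second forces $\pi(0)\in D(\bldy)$. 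Since $\pi(0)\notin T$, these are incompatible.

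The main obstacle is the bookkeeping at the two ends. In sufficiency, the decoder's choice must be made so that (D1) holds exactly, including the tie cases. In necessity, the $\delta$ margins must be split so that the strict inequality $|e'_{\pi(0)}|>\Delta$ holds. Both points follow Roth's original argument, which we may cite as \cite{ref1,MyAdv}.
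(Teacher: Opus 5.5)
The paper gives no proof of this theorem; it is quoted from the literature, so your argument has to stand on its own.

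\textbf{Necessity.} This direction holds up. You normalize so that $|x_{\pi(m)}|=2\delta$ and write $\bldy=\mathbf 0+\boldsymbol\epsilon+\mathbf e=\mathbf x+\boldsymbol\epsilon'+\mathbf e'$ with $|e'_{\pi(0)}|=2h\delta+2\delta>\Delta$. Then (D1) and (D2) clash exactly as you say. Two small repairs are needed:
\begin{itemize}
\item Attainment of the maximum is unnecessary. Any $\mathbf x\in\cC$ with $(2h_m(\mathbf x)+2)\delta>\Delta$ will do.
\item If $h=\infty$, the normalization is impossible. Instead take a nonzero codeword of weight at most $m$ and scale it so that $\|\mathbf x\|_\infty>\Delta$. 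Split its support into $T$ and $S$ as before and use $\boldsymbol\epsilon=\boldsymbol\epsilon'=\mathbf 0$.
\end{itemize}

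\textbf{Sufficiency.} This direction has a genuine gap: the decoder is never pinned down, and both rules you suggest fail. The missing observation is that (D1) must hold for \emph{every} explanation of $\bldy$ with error weight at most $\tau$. So $D(\bldy)$ must lie in the intersection of all their supports, and this intersection need not be the support of any single explanation.

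For instance, let $\delta=1$, $n=3$, $\tau=1$, $\sigma=0$ and $\cC=\mathrm{span}\{(-3,10,1)\}$, so that $h_2(\cC)=10$, and take $\Delta=22$. The word $\bldy=(3,0,0)$ admits no explanation with zero error. It does have the two weight-one explanations $\mathbf 0+\mathbf 0+(3,0,0)$ and $(3,-10,-1)+(0,0,1)+(0,10,0)$. A minimal-support rule outputs $\{0\}$ or $\{1\}$ and violates (D1) for the other explanation, however ties are broken. The only legal output is $\emptyset$.

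Thresholding a chosen candidate at $\Delta/2$ fails too. The word $\bldy=(3,1,-1)=\mathbf 0+(0,1,-1)+(3,0,0)$ also equals $(3,-10,-1)+(0,-1,0)+(0,12,0)$, and $12>\Delta/2$, so that candidate flags coordinate $1$ outside the true support $\{0\}$.

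\textbf{The repair.} It is already inside your (D2) computation, which proves more than you use it for. Take any two explanations $\bldy=\mathbf c+\boldsymbol\epsilon+\mathbf e=\hat{\mathbf c}+\hat{\boldsymbol\epsilon}+\hat{\mathbf e}$ with $w(\mathbf e)\le\tau+\sigma$ and $w(\hat{\mathbf e})\le\tau$. Your computation shows $\mathrm{Supp}_\Delta(\mathbf e)\subseteq\mathrm{Supp}_0(\hat{\mathbf e})$. So define $D(\bldy)$ as the intersection of $\mathrm{Supp}_0(\hat{\mathbf e})$ over all explanations of $\bldy$ with $w(\hat{\mathbf e})\le\tau$, and ``e'' if there are none. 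Then:
\begin{itemize}
\item (D1) holds because the support of the true error is one of the intersected sets.
\item (D2) holds because each intersected set contains $\mathrm{Supp}_\Delta(\mathbf e)$.
\end{itemize}

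\textbf{A remark on the statement.} It tacitly needs $m\ge 1$. For $\tau=\sigma=0$ the constant decoder $D\equiv\emptyset$ works at every ratio, even though $h_0(\cC)=1$. That is why your necessity construction, which places $\pi(0)$ in an empty $S$, cannot go through there.
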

	
	Recalling our definition of a decoder, 
	the decoding capability 
    of analog codes is characterized 
    not only
    by the number of correctable or detectable outlying errors
	(determined by the parameters $\tau$ and $\sigma$), 
	but also by the ratio $\Delta/\delta$. 
	Theorem~\ref{refRoth} provides a necessary and sufficient condition under which
	a given triple $(\tau, \sigma, \Delta/\delta)$ is attainable by a linear code ${\cC}$,
	in terms of the $m$-heights of ${\cC}$. In particular, by~(\ref{eq:dC}), 
	the inequality $d({\cC}) > 2\tau + \sigma$ is a necessary and sufficient condition for the existence of
	a $(\tau, \sigma)$-decoder for $({\cC}, \Delta : \delta)$, for some finite (yet
	sufficiently large) ratio $\Delta/\delta$.

	Theorem~\ref{refRoth} motivated Roth in~\cite{ref1} to define for every \(m \in [n+1]\) the expression
	\[
	\Gamma_m({\cC}) \triangleq 2h_m({\cC}) + 2,
	\]
	so that \(\Gamma_{2\tau + \sigma}({\cC})\) is the smallest ratio \({\Delta}/{\delta}\) for which there is a \((\tau, \sigma)\)-decoder for \(({\cC}, \Delta : \delta)\). 
	Equivalently, 
	\(\Gamma_{2\tau + \sigma}(\cC)\) is the smallest \(\Delta\) such that there is a \((\tau, \sigma)\)-decoder for \(({\cC}, \Delta : 1)\). 
    A natural question is then to determine $$\Gamma_m(n,k) \eqdef \min \set{ \Gamma_m(\cC): \cC \textrm{ is a linear $[n,k]$ code over $\R$} },$$ for given values of $n$, $k$ and $m$. 

    For the cases $m=1$ and $m=2$, which correspond to single-error detection and single-error correction, respectively, Roth proposed several code constructions and derived upper bounds on $\Gamma_m(n,k)$. 
These results can be summarized as follows.

\begin{theorem}[\cite{ref1}]
Let $k<n$ be positive integers. Then the following statements hold:
\begin{enumerate} 
    \item $\Gamma_1(n,k)\le 2 \ceil{\frac{n}{n-k}}$.
    \item $\Gamma_2(n,k)\le 2 \ceil{\frac{2n}{n-k}}$ if $n\le (n-k)(n-k-1)$.
    \item $\Gamma_2(n,n-2)\le \frac{1}{\sin^2(\pi/2n)}$ for $n>2$.
\end{enumerate}
\end{theorem}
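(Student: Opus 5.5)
Since $\Gamma_m(n,k)$ is a minimum over all real linear $[n,k]$ codes, each of the three items follows by exhibiting one explicit code $\cC$ and bounding $h_m(\cC)$. We then use $\Gamma_m(\cC)=2h_m(\cC)+2$.

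\textbf{Item 1.} Set $r=n-k$ and partition $[n]$ into $r$ blocks $J_0,\dots,J_{r-1}$, each of size at most $\ell=\ceil{n/r}$. Let $H$ be the $r\times n$ parity-check matrix whose $i$th row is the indicator vector of $J_i$. Then $\cC=\ker H$ consists of the vectors whose entries sum to zero on every block, and $\dim \cC = n-r=k$.

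Take a nonzero $\bldx\in\cC$ and let $x_{\pi(0)}$ be a largest-magnitude entry, lying in block $J_i$. Because the entries of $\bldx$ on $J_i$ sum to zero, $|x_{\pi(0)}|$ equals the absolute sum of the other at most $\ell-1$ entries of $J_i$. Hence one of those entries has magnitude at least $|x_{\pi(0)}|/(\ell-1)$, so $|x_{\pi(1)}|\ge |x_{\pi(0)}|/(\ell-1)$ and $h_1(\cC)\le \ell-1$. This gives $\Gamma_1\le 2\ell$.

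The case $\ell=1$ is degenerate: then $r=n$, so $k=0$, which is excluded. If $k>0$ but some blocks have size $1$, those coordinates are identically zero on $\cC$. To keep $\dim\cC=k$ exactly, I would instead arrange the blocks to have sizes $\lfloor n/r\rfloor$ or $\lceil n/r\rceil$, all at least $2$ since $r<n$.

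\textbf{Item 3.} I would use Roth's cyclic construction. Take the $2\times n$ parity-check matrix with columns $(\cos(j\pi/n),\sin(j\pi/n))$ for $j\in[n]$, i.e.\ $n$ unit vectors equally spaced over a half-circle. Any two columns are independent, so $d(\cC)\ge3$.

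To compute $h_2$, fix the two largest-magnitude coordinates $a,b$ of a codeword $\bldx$. Projecting the relation $\sum_j x_j\bldh_j=0$ onto the direction orthogonal to $\bldh_b$ gives
\[
|x_a|\,|\sin\theta_{ab}|\le |x_{\pi(2)}|\sum_{j\ne a,b}|\sin\theta_{jb}|,
\]
where $\theta_{ab}$ denotes the angle between $\bldh_a$ and $\bldh_b$. Optimizing over $a,b$ and summing sines over equally spaced angles yields closed forms via $\sum_j\sin(j\pi/n)=\cot(\pi/2n)$. The main obstacle is this maximization. The extremal configuration is adjacent $a,b$, where $\sin\theta_{ab}=\sin(\pi/n)$ is smallest. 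The resulting ratio should simplify, using $2h_2+2$ and half-angle identities, to $1/\sin^2(\pi/2n)$. I would verify the identity $2\cot(\pi/2n)/\sin(\pi/n)+\dots$ carefully, taking the sharper projection onto the direction bisecting $\bldh_a,\bldh_b$ if the crude one is too loose.

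\textbf{Item 2.} For $n\le r(r-1)$, with $r=n-k$, choose $H$ as the incidence matrix of a partial linear space. Each column indexes a pair of rows, and distinct columns give distinct pairs; this is possible because $\binom{r}{2}\ge n/2$. The supports are then filled with signs so that each row has weight at most $\ceil{2n/r}$.

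Each coordinate lies in two checks, and any two coordinates share at most one check. Hence for the top two coordinates $a,b$, some check contains $a$ but not $b$. Within that check, $|x_a|$ is bounded by the sum of at most $\ceil{2n/r}-1$ other entries, each of magnitude at most $|x_{\pi(2)}|$. This gives $h_2\le\ceil{2n/r}-1$, and therefore $\Gamma_2\le 2\ceil{2n/r}$.
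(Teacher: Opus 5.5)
The paper does not prove this theorem; it quotes it from Roth~\cite{ref1}. So your proposal has to stand on its own.

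\textbf{Items~1 and~3.} Both are fine.

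In item~3 the ``main obstacle'' disappears. Since $|\sin|$ has period $\pi$, for every $b$ the numbers $|\sin\theta_{jb}|$, $j\in[n]$, are exactly $\sin(i\pi/n)$, $i\in[n]$. Hence $\sum_{j\ne a,b}|\sin\theta_{jb}|=\cot(\pi/2n)-|\sin\theta_{ab}|$. Your crude projection then gives $2h_2(\cC)+2\le 2\cot(\pi/2n)/\sin(\pi/n)=1/\sin^2(\pi/2n)$ exactly. This is the paper's formula for $\Delta_{(i,j)}$ combined with Lemma~\ref{lem:simplify}, so no bisector projection is needed.

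\textbf{Item~2 has a genuine gap.} Columns with pairwise distinct support pairs exist only if $\binom{r}{2}\ge n$, i.e.\ $n\le r(r-1)/2$. The inequality $\binom{r}{2}\ge n/2$ that you quote is what the hypothesis $n\le r(r-1)$ gives, and it only lets each pair be used \emph{twice}. Indeed, $r(r-1)$ is exactly the number of pairwise non-proportional columns with two $\pm1$ entries.

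For $r(r-1)/2<n\le r(r-1)$, two columns $a,b$ must share a support $\{i,j\}$. To keep $d(\cC)\ge 3$ they need non-proportional sign patterns, e.g.\ $\boldsymbol{e}_i+\boldsymbol{e}_j$ and $\boldsymbol{e}_i-\boldsymbol{e}_j$. For a codeword whose two largest entries sit at $a$ and $b$, every check containing $a$ also contains $b$, so your step ``some check contains $a$ but not $b$'' fails.

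You need a separate argument for this case:
\begin{enumerate}
\item Take the $\pm1$ combination of rows $i$ and $j$ that cancels column $b$. Column $a$ then gets coefficient $\pm 2$.
\item No third column has support $\{i,j\}$, so every other column meeting $\{i,j\}$ meets it in one row only and gets coefficient $\pm1$.
\item With $w=\lceil 2n/r\rceil$, there are at most $2(w-2)$ such columns. Hence $2|x_a|\le 2(w-2)|x_{\pi(2)}|$, which is even better than what you need.
\end{enumerate}

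Two smaller points.
\begin{itemize}
\item Row weights are the degrees of the support multigraph; the choice of signs does not control them. For $n>\binom{r}{2}$ you can take all $\boldsymbol{e}_i+\boldsymbol{e}_j$, together with $\boldsymbol{e}_i-\boldsymbol{e}_j$ on the edges of a near-regular simple graph with $n-\binom{r}{2}$ edges. The maximum degree is then $r-1+\lceil 2(n-\binom{r}{2})/r\rceil=\lceil 2n/r\rceil$.
\item If $H$ turns out rank-deficient, pass to a $k$-dimensional subcode. This cannot increase $h_2$.
\end{itemize}
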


However, apart from the trivial lower bound $\Gamma_m(\cC)\ge 4$, no nontrivial lower bounds were established in~\cite{ref1}. 
This led Roth to pose four problems concerning the optimality of the above upper bounds. 
In particular, Roth's first question is the following.

\setcounter{problem}{0}
\renewcommand{\theproblem}{\Alph{problem}}
\begin{problem}[{\cite[Problem~1]{ref1}}]\label{Pro-A}
    Identify the values of $k$ and $n$ for which every linear $[n,k]$ code $\cC$ over $\R$ satisfies
    \begin{equation}\label{eq:h1c}
        h_1(\cC)\ge\left\lceil\frac{k}{n-k}\right\rceil,
    \end{equation}
    or, equivalently, 
    $$\Gamma_1(\cC)\ge 2\left \lceil\frac{n}{n-k}\right\rceil.$$
\end{problem}
Roth's second question is the special case of Problem~A where $k=n-2$ and $n$ is even. Roth's third question gives a geometric reformulation of the second question under the additional assumption that each column has unit norm.

Roth's fourth problem concerns single-error correction.
\begin{problem}[{\cite[Problem~4]{ref1}}] Let $\cC(n)$ be the linear $[n,n-2]$ code over $\R$ defined by
\begin{equation}\label{construction}
        \cC(n)\triangleq
        \left\{
        (c_0,c_1,\ldots,c_{n-1})\in\R^n:
        \sum_{j\in[n]}c_j\omega^j=0
        \right\},
\end{equation}
where $\alpha=\pi/n$ and $\omega=e^{i\alpha}$ is a primitive $2n$-th root of unity. 

Does the code $\cC(n)$ have the smallest possible value of $\Gamma_2(n,n-2)$ among all linear $[n,n-2]$ codes over $\R$?
\end{problem}

In this paper, we first prove that, for every $0<k<n$, every linear $[n,k]$ code $\cC$ over $\R$ satisfies
    \[h_1(\cC)\ge \left\lceil\frac{k}{n-k}\right\rceil.\]
This settles \cite[Problems~1--3]{ref1}.
Moreover, we give an affirmative answer to \cite[Problem~4]{ref1}, by proving the matching lower bound
\begin{equation}\label{eq:mainbnd}
    \Gamma_2(n,n-2) \ge \frac{1}{\sin^2\frac{\pi}{2n}}.
\end{equation}
Consequently, all four open problems posed in~\cite{ref1} are resolved.

In addition, for fixed redundancy $r$, we present a new construction that yields the upper bound
\[
\Gamma_2(n,n-r)\le O\left(n^{1+\frac{1}{r-1}}\right).
\]
This provides an upper bound in the fixed-redundancy regime, which was not covered by the previously known constructions. In contrast, for other regimes of $r$, such as \[ r=\Theta(\log n),\qquad r=\Theta\bigl(n^{1/(\ell+1)}\bigr),\qquad r(r-1)\ge n, \] upper bounds on $\Gamma_2(n,n-r)$ have already been established in the literature; see Table~\ref{tab:summary}.

\subsection{A geometric characterization of $\Gamma_2$}
  To derive the bound \eqref{eq:mainbnd}, we need an alternative characterization of $\Gamma_2(\cC)$. 
For a $r\times n$ matrix $H$, let
\begin{equation}
	S_{H} \triangleq \set{ H\boldsymbol{x} : \boldsymbol{x} \in Q(n,1)}.\nonumber
\end{equation}
The set $S_H$ is the zonotope generated by the columns of $H$. 
It is a compact, convex, centrally symmetric subset of $\mathbb R^r$. 
In the special case $r=2$, if the columns of $H$ are pairwise nonparallel, then $S_H$ is a centrally symmetric polygon whose edges are parallel to the columns of $H$.

	\begin{proposition}[\cite{ref1}]\label{meanofD}
			Given a linear $[n,k,d\ge 3]$ code $\cC$ over $\mathbb{R}$, let
			$H = (\bldh_j)_{j\in[n]}$ be any $(n-k)\times n$ parity-check matrix of $C$
			and write $S = S_{H}$. Then $\Gamma_2(\cC)$ equals the smallest
			$\Delta \in \mathbb{R}^+$ such that
			for every distinct $j,j' \in [n]$ and every pair $(e,e') \in \mathbb{R}^2$
			such that $|e| > \Delta$, the translations
			\begin{equation}
				e \cdot \bldh_j + S \quad \text{and} \quad e' \cdot \bldh_{j'} + S \label{form1} 
			\end{equation}
			are disjoint; equivalently,
			\begin{equation}
				e' \cdot \bldh_{j'} \notin e \cdot \bldh_j + 2S.\label{form2}
			\end{equation}
	\end{proposition}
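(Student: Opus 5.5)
The plan is to derive Proposition~\ref{meanofD} from Theorem~\ref{refRoth} by showing that the disjointness condition (\ref{form1}) holding for all $|e|>\Delta$ is equivalent to $\Delta \ge \Gamma_2(\cC)=2h_2(\cC)+2$. Equivalently, (\ref{form1}) holds for all $|e|>\Delta$ if and only if $\Delta$ admits a $(1,0)$-decoder for $(\cC,\Delta:1)$. The disjointness and the inclusion forms are equivalent by central symmetry and convexity of $S$. Indeed, $(e\bldh_j+S)\cap(e'\bldh_{j'}+S)\neq\emptyset$ iff $e'\bldh_{j'}-e\bldh_j\in S-S=2S$, which is (\ref{form2}). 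So the work is to connect (\ref{form2}) with the $2$-height.

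The key step translates (\ref{form2}) into codewords. The condition $e'\bldh_{j'}-e\bldh_j\in 2S$ means there is $\bldx$ with $\|\bldx\|_\infty\le 2$ and $H(e'\boldsymbol{u}_{j'}-e\boldsymbol{u}_j-\bldx)=0$, where $\boldsymbol{u}_j$ are unit vectors. Thus $\boldsymbol{c}=e'\boldsymbol{u}_{j'}-e\boldsymbol{u}_j-\bldx\in\cC$. Its entries satisfy $|c_j|\ge|e|-2$, $|c_{j'}|\le |e'|+2$, and all other entries are bounded by $2$ in absolute value.

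For the forward direction, suppose (\ref{form2}) fails for some $|e|>\Delta$. We get such a codeword $\boldsymbol c$, and we want to show $h_2(\boldsymbol{c})$ is at least about $(\Delta-2)/2$. First, the free parameter $e'$ can be chosen, after adjusting $\bldx$, so that $|c_{j'}|\le 2$. To see this, replace $\bldx$ by a vector with the same coordinates off $j'$ and a modified $j'$-coordinate; since $e'$ is arbitrary, the $j'$-entry of $\boldsymbol c$ can be made anything. Then $\boldsymbol c$ has one entry of size $>\Delta-2$ and all others of size $\le 2$, so $h_2(\boldsymbol c)\ge h_1$-type ratio $>(\Delta-2)/2$.

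This is the delicate point. With $m=2$ the height compares the largest entry with the third largest, not the second. So I should instead keep $c_{j'}$ free as the second-largest entry. The relevant bound is then $|c_j|/|c_{\text{third}}|\ge(|e|-2)/2$. Hence, if $\Delta\ge 2h_2(\cC)+2$, no such codeword exists and (\ref{form2}) holds.

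For the converse, take a codeword $\boldsymbol c$ attaining $h_2(\cC)$, which exists by compactness of the unit sphere in $\cC$ since $d\ge 3$ makes $h_2$ finite. Scale it so that its third-largest entry has magnitude $2$, and let $j,j'$ be the positions of the two largest entries. Set $e=-(c_j+2\,\mathrm{sgn}(c_j))$, let $e'$ absorb $c_{j'}$, and clip the other entries into $\bldx$. This exhibits a violation of (\ref{form2}) with $|e|=2h_2+2$. Perturbing slightly with $|e|$ arbitrarily close shows that any smaller $\Delta$ fails, so the minimal $\Delta$ equals $\Gamma_2(\cC)$. An alternative to these two direct directions is to invoke Theorem~\ref{refRoth} together with a minimum-distance-style decoder.

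The main obstacle is the bookkeeping between strict and non-strict inequalities ($|e|>\Delta$ versus the threshold $2h_2+2$) and between the first-versus-third entry in the definition of $h_2$.
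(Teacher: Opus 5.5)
The paper gives no proof of this proposition; it is quoted from Roth~\cite{ref1}, so there is no in-text argument to compare yours with. On its own merits, your final argument is correct.

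You reduce disjointness to $e'\bldh_{j'}-e\bldh_j\in S-S=2S$. You then observe that, for a given $\Delta$, a violating pair $(e,e')$ with $|e|>\Delta$ exists for some distinct $j,j'$ exactly when some codeword $\boldsymbol{c}$ satisfies $|c_k|\le 2$ for all $k\notin\{j,j'\}$ and $|c_j|>\Delta-2$. Rescaling so that the third-largest magnitude equals $2$ identifies the threshold with $2h_2(\cC)+2$. This works directly from the definition of $h_2$ and never uses Theorem~\ref{refRoth}.

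The alternative you mention reads $e\bldh_j+S$ as the syndrome set of an outlier $e$ at position $j$ plus tolerable noise. It shows that disjointness is exactly what a syndrome-based $(1,0)$-decoder needs, and then invokes Theorem~\ref{refRoth}. That route explains the operational meaning of the condition but relies on the decoder characterization. Your route is self-contained. The decoder equivalence you assert in your first paragraph then becomes a corollary rather than an ingredient.

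Three points to fix in the write-up.

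\emph{Delete the false aside.} You claim that $e'$ and $x_{j'}$ can be readjusted so that $|c_{j'}|\le 2$. This fails because $\boldsymbol{c}$ must stay in $\ker H$, so its $j'$-entry cannot be changed on its own. Also, in the forward direction $e'$ is handed to you by the violating instance, not chosen. The reason you gave for abandoning the step (first versus third entry) is not the real problem: since $h_2(\boldsymbol{c})\ge h_1(\boldsymbol{c})$, such a codeword would have sufficed; it simply need not exist. Your corrected version, which leaves $c_{j'}$ unconstrained and compares the largest magnitude with the third-largest, is the right one.

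\emph{Justify that $|c_{\pi(2)}|$ is nonzero.} From $\Delta\ge 2h_2(\cC)+2\ge 4$ and $|e|>\Delta$ you get $|c_j|\ge|e|-2>2$, so $\boldsymbol{c}\neq\boldsymbol{0}$. Then $d(\cC)\ge 3$ forces $c_{\pi(2)}\neq 0$. Combined with $|c_{\pi(2)}|\le 2$, this gives $h_2(\boldsymbol{c})\ge |c_j|/2>(\Delta-2)/2$.

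\emph{Drop the perturbation step in the converse.} Your violation already occurs at $|e|=2h_2(\cC)+2$, which exceeds every $\Delta<2h_2(\cC)+2$. The forward direction shows that every $\Delta\ge 2h_2(\cC)+2$ works. So the admissible set is exactly $[\,2h_2(\cC)+2,\infty)$, and its minimum is $\Gamma_2(\cC)$; this settles the strict-versus-non-strict bookkeeping you flagged. You also do not need a maximizing codeword: applying your construction to an arbitrary nonzero $\boldsymbol{c}$ gives a violation at $|e|=2h_2(\boldsymbol{c})+2$, and taking the supremum over $\boldsymbol{c}$ suffices.
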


By Proposition~\ref{meanofD}, if the parity-check matrix $H$ contains two parallel columns, then $\Gamma_2(\cC)=\infty$. 
Hence, in studying finite values of $\Gamma_2(\cC)$, we may assume that the columns of $H$ are pairwise nonparallel. 
Moreover, permuting the columns of $H$ or multiplying any column by $\pm1$ does not change $\Gamma_2(\cC)$. 
Therefore, in the redundancy-two case, we may assume without loss of generality that the columns
\[
\boldsymbol h_0,\boldsymbol h_1,\ldots,\boldsymbol h_{n-1}
\]
are ordered by increasing polar angle and satisfy
\[
0\le \arg(\boldsymbol h_0)
<
\arg(\boldsymbol h_1)
<
\cdots
<
\arg(\boldsymbol h_{n-1})
<
\pi.
\]

Our proof of the lower bound for $\Gamma_2(n,n-2)$ relies on Proposition~\ref{meanofD} and the theorem below.

\begin{theorem}[Separation of an affine set and a convex set,{\normalfont\cite[Page~49]{ConvexBoyd}}]\label{thm:separ}
  Let $C$ be a convex set and let $D$ be an affine set.
    If $C$ and $D$ are disjoint, then there exist $\bldu\neq \boldsymbol{0}$ and $b\in\mathbb{R}$ such that
    \[
        \langle \bldu,\bldx\rangle \le b
        \quad \text{for all } \bldx\in C,
    \]
    and
    \[
        \langle \bldu,\bldx\rangle \ge b
        \quad \text{for all } \bldx\in D.
    \]
\end{theorem}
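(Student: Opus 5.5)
We reduce the statement to separating a single point from a convex set, treating the affine structure of $D$ only as a convex set; finite dimensionality ($C,D\subseteq\R^r$) is used throughout. Set $E\eqdef C-D=\set{\bldx-\bldy:\bldx\in C,\ \bldy\in D}$. Since $C$ and $D$ are convex, $E$ is convex, and disjointness gives $\boldsymbol 0\notin E$. It suffices to find $\bldu\neq\boldsymbol 0$ with $\ip{\bldu}{\bldx-\bldy}\le 0$ for all $\bldx\in C$ and $\bldy\in D$. Then $\sup_{\bldx\in C}\ip{\bldu}{\bldx}\le\inf_{\bldy\in D}\ip{\bldu}{\bldy}$, and any $b$ between these two numbers works. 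Both are finite because $C$ and $D$ are nonempty; the case of an empty set is trivial.

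To produce $\bldu$, we work with the closure $\overline E$, which is closed and convex.

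\emph{Case 1: $\boldsymbol 0\notin\overline E$.} Let $\bldv$ be the Euclidean projection of $\boldsymbol 0$ onto $\overline E$; it exists and is unique. The variational inequality for the projection gives $\ip{\bldv}{\boldsymbol z-\bldv}\ge 0$ for every $\boldsymbol z\in\overline E$. Hence $\ip{-\bldv}{\boldsymbol z}\le-\norm{\bldv}^2<0$, and $\bldu=-\bldv$ works.

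\emph{Case 2: $\boldsymbol 0\in\overline E\setminus E$.} First we show that $\boldsymbol 0$ is not an interior point of $\overline E$. For a convex set in $\R^r$, the relative interior of the closure equals the relative interior of the set. If $\boldsymbol 0$ were interior to $\overline E$, it would lie in $\operatorname{ri}\overline E=\operatorname{ri}E\subseteq E$, which is a contradiction. So there are points $\boldsymbol z_k\notin\overline E$ with $\boldsymbol z_k\to\boldsymbol 0$.

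Applying Case 1 to $\overline E-\boldsymbol z_k$ yields unit vectors $\bldu_k$ with $\ip{\bldu_k}{\boldsymbol z}\le\ip{\bldu_k}{\boldsymbol z_k}$ for all $\boldsymbol z\in\overline E$. By compactness of the unit sphere, a subsequence converges to a unit vector $\bldu$. Passing to the limit gives $\ip{\bldu}{\boldsymbol z}\le 0$ on $\overline E\supseteq E$, as required.

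The main obstacle is Case 2, specifically the claim $\boldsymbol 0\notin\operatorname{int}\overline E$, which rests on the identity $\operatorname{ri}\overline E=\operatorname{ri}E$. I would prove it directly. If $E$ is not full-dimensional, work inside its affine hull, or if $\boldsymbol 0$ is not in that hull, separate by a normal to the hull. In the full-dimensional case, take a small simplex around $\boldsymbol 0$ with vertices in $\operatorname{int}\overline E$. Each vertex can be approximated by points of $E$, and for close enough approximations the new simplex, which lies in $E$ by convexity, still contains $\boldsymbol 0$. This contradicts $\boldsymbol 0\notin E$.
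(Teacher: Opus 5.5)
Your proof is correct. The paper gives no proof of its own. It imports the statement from Boyd--Vandenberghe, where it is a direct corollary of the general separating hyperplane theorem for disjoint convex sets, together with the remark that $\bldu$ must annihilate the direction space of $D$: a linear functional bounded below on an affine set is constant there. The paper redoes exactly that step inline when it derives $\ip{\bldu}{\bldh_j}=0$ in computing $\Delta_{(i,j)}$.

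What you wrote is a self-contained proof of the general theorem, never using that $D$ is affine. You separate $\boldsymbol 0$ from $\overline{C-D}$ by a nearest point, and use a compactness limit when $\boldsymbol 0\in\overline E\setminus E$. So your route is the same as the cited one, and you simply supply the part the paper outsources.

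Two simplifications are available.
\begin{itemize}
\item In the paper's only application, $C=2S$ is compact and $D=L_e$ is a line. Hence $C-D$ is closed and your Case~1 alone suffices, even giving strict separation. The step you flag as the main obstacle is never needed there.
\item In general, Case~2 only needs $\operatorname{int}\overline E\subseteq E$, and your simplex argument proves this directly. Having $\boldsymbol 0\in\operatorname{int}\overline E$ already forces full dimension, so neither the affine-hull case split nor the identity $\operatorname{ri}\overline E=\operatorname{ri}E$ is needed.
\end{itemize}

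One small correction: the empty case is not trivial. For $C=\emptyset$ and $D=\R^r$, no pair $(\bldu,b)$ exists. So nonemptiness of $C$ and $D$ must be assumed, as in the standard statement; it holds in the paper's use.
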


\subsection{Upper bound based on coherence}

Let $H$ be a real matrix with normalized columns, i.e., \[ \norm{\bldh_i}=1, \qquad i=0,\ldots,n-1. \] The coherence of $H$ is defined as \[ \rho(H)\eqdef \max_{i\ne j} \left|\ip{\bldh_i}{\bldh_j}\right|. \] We shall use the following bound due to Song and Cai~\cite[Theorem~1]{Song2026analog}. \begin{proposition}\label{prop:coh} If $\rho(H)<1$, then the code $\cC=\ker H$ satisfies \begin{equation}\label{eq:coherence-bound} \Gamma_2(\cC) \le \frac{2n}{\sqrt{1-\rho(H)^2}}. \end{equation} \end{proposition}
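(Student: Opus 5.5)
The plan is to verify the disjointness criterion of Proposition~\ref{meanofD} directly for the threshold $\Delta_0 \eqdef 2n/\sqrt{1-\rho(H)^2}$. Since $\Gamma_2(\cC)$ is the smallest admissible $\Delta$, showing that $\Delta_0$ is admissible gives $\Gamma_2(\cC)\le\Delta_0$.

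\textbf{Preliminaries.} Since every column has unit norm and $\rho(H)<1$, any two columns of $H$ are nonzero and nonparallel. Hence no nonzero codeword of $\ker H$ has weight $\le 2$, so $d(\cC)\ge 3$ and Proposition~\ref{meanofD} applies. If $H$ does not have full row rank, write $H=MH_0$, where $H_0$ is a full-rank parity-check matrix and $M$ is injective. Then $S_H=MS_{H_0}$, and the membership condition \eqref{form2} for $H$ is equivalent to the same condition for $H_0$, with matching columns. So we may work with $H$ itself.

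\textbf{Main step.} Suppose, for contradiction, that for some distinct $j,j'$ and some $e,e'$ with $|e|>\Delta_0$ we have $e'\bldh_{j'}\in e\bldh_j+2S$. That is,
\[
e'\bldh_{j'}=e\bldh_j+2H\bldx \quad\text{for some } \bldx\in Q(n,1).
\]
Let $c=\ip{\bldh_j}{\bldh_{j'}}$, so $|c|\le\rho(H)<1$. Define the unit vector
\[
\bldu\eqdef\frac{\bldh_j-c\,\bldh_{j'}}{\sqrt{1-c^2}},
\]
which is orthogonal to $\bldh_{j'}$ and satisfies $\ip{\bldu}{\bldh_j}=\sqrt{1-c^2}\ge\sqrt{1-\rho(H)^2}$. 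Taking the inner product of the identity with $\bldu$ kills the left-hand side and gives
\[
0=e\sqrt{1-c^2}+2\ip{\bldu}{H\bldx}.
\]
By Cauchy--Schwarz and the unit column norms,
\[
|\ip{\bldu}{H\bldx}|\le\sum_i|x_i|\,|\ip{\bldu}{\bldh_i}|\le n.
\]
Therefore $|e|\sqrt{1-\rho(H)^2}\le 2n$, i.e.\ $|e|\le\Delta_0$, contradicting $|e|>\Delta_0$. Hence \eqref{form2} holds for every $|e|>\Delta_0$, so $\Delta_0$ is admissible and the bound \eqref{eq:coherence-bound} follows.

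\textbf{Main obstacle.} The key point is the choice of the test direction $\bldu$. It must be orthogonal to $\bldh_{j'}$ so that the unknown $e'$ disappears, and it must keep a component of size at least $\sqrt{1-\rho^2}$ along $\bldh_j$. The only other care needed is the reduction to a full-rank parity-check matrix, which is a routine check.
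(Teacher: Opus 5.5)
Your proof is correct. Note that the paper does not prove this proposition; it imports it from Song and Cai (their Theorem~1). So there is no in-paper argument to match, and your proposal supplies a complete, self-contained proof from Proposition~\ref{meanofD}.

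\textbf{Comparison with the paper's related lemma.} Your argument is essentially the sufficiency half of the paper's lemma computing $\Delta_{(i,j)}$ in Section~\ref{sec:cal}. There, too, the test direction is taken orthogonal to the column carrying the free coefficient $e'$, so that $e'$ drops out.

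The two differ in what they must establish. The paper works in $\R^2$, where $\bldh_{j'}^{\perp}$ is a line, and uses the separation theorem (Theorem~\ref{thm:separ}) to show this direction is optimal; that yields an exact formula. In $\R^r$ with $r\ge 3$ the admissible directions form an $(r-1)$-dimensional subspace, and an exact value would require optimizing over all of it. For an upper bound you need only one good direction: the normalized component $\bldu$ of $\bldh_j$ orthogonal to $\bldh_{j'}$, together with the crude estimate $|\ip{\bldu}{\bldh_i}|\le 1$. This is why your route needs no separation argument and works in every dimension.

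\textbf{Other remarks.} Your reduction to a full-row-rank parity-check matrix is handled properly. Injectivity of $M$ is exactly what makes membership in $e\bldh_j+2S$ invariant under the change of parity-check matrix.

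Your computation also gives a slightly sharper bound. Since $\ip{\bldu}{\bldh_{j'}}=0$ and $\ip{\bldu}{\bldh_j}=\sqrt{1-c^2}$, you get $|e|\sqrt{1-c^2}\le 2\sqrt{1-c^2}+2(n-2)$. Hence $\Gamma_2(\cC)\le 2+2(n-2)/\sqrt{1-\rho(H)^2}$, equivalently $h_2(\cC)\le (n-2)/\sqrt{1-\rho(H)^2}$.
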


\section{Sharp lower bound for single-error detection}
\label{sec:detection}
In this section,
we address Problem~A by proving~\eqref{eq:h1c}.
To this end, we first establish the following key lemma.

\begin{lemma}\label{lal}
 Let \( t,r,N \) be positive integers. For any vectors \( \boldsymbol{v}_0, \boldsymbol{v}_1, \dots, \boldsymbol{v}_{N} \in \mathbb{R}^r \) with $N\ge tr$, there exists  \( i \in [N+1] \)  such that
$$
t\boldsymbol{v}_i=\sum_{j\ne i}\lambda_j\boldsymbol{v}_j,
$$
where $|\lambda_j|\le 1$ for $j\in[N+1]\setminus\{i\}$.
\end{lemma}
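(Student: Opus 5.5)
The plan is a double induction: on the dimension $r$, to dispose of degenerate configurations, and on $t$, to peel off one "basis" of $r$ vectors at a time. The engine is the classical maximal-determinant (Auerbach-type) argument, which handles the case $t=1$. The condition $N\ge tr$ means there are $N+1\ge tr+1$ vectors, and the goal is to find one vector $\bldv_i$ such that $t\bldv_i$ lies in the zonotope generated by the remaining ones.

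\emph{Reduction to the full-rank case.} Let $d=\dim\operatorname{span}\{\bldv_0,\dots,\bldv_N\}$.
\begin{itemize}
\item If $d=0$, all vectors are zero and any $i$ works with all $\lambda_j=0$.
\item If $0<d<r$, identify the span with $\R^d$ via a linear isomorphism. Linear relations with bounded coefficients are preserved, and $N\ge tr\ge td$. So the statement in dimension $d$ gives the claim; this is the induction on $r$, or we can simply prove the full-rank case for every dimension.
\end{itemize}
Hence we may assume the vectors span $\R^r$.

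\emph{Key step: maximal determinant.} Among all $r$-subsets of the vectors, choose $B=\{\bldv_{b_1},\dots,\bldv_{b_r}\}$ maximizing $|\det(\bldv_{b_1},\dots,\bldv_{b_r})|$. This maximum is nonzero by full rank. For any other vector $\bldw$, write $\bldw=\sum_k\mu_k\bldv_{b_k}$. By Cramer's rule, $\mu_k$ equals the determinant with $\bldv_{b_k}$ replaced by $\bldw$, divided by the maximal determinant. Maximality then forces $|\mu_k|\le1$. So every vector outside $B$ lies in the zonotope $\{\sum_k\mu_k\bldv_{b_k}:|\mu_k|\le1\}$. This already settles $t=1$: there are at least $r+1$ vectors, so some $\bldw=\bldv_i\notin B$ exists, and we take $\lambda_j=\mu_k$ on $B$ and $\lambda_j=0$ elsewhere.

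\emph{Induction on $t$.} Let $t\ge2$, and assume the lemma for $t-1$ in the same dimension $r$. Remove $B$; the remaining family has at least $tr+1-r=(t-1)r+1$ vectors. Applying the induction hypothesis to it gives an index $i\notin B$ with
\[
(t-1)\bldv_i=\sum_{j\notin B,\ j\ne i}\lambda_j\bldv_j,\qquad |\lambda_j|\le1 .
\]
The remaining family need not span $\R^r$, but the induction hypothesis covers arbitrary vectors, since the reduction step applies to it as well. By the key step, $\bldv_i=\sum_k\mu_k\bldv_{b_k}$ with $|\mu_k|\le1$. Adding the two relations gives
\[
t\bldv_i=\sum_{j\notin B,\ j\ne i}\lambda_j\bldv_j+\sum_k\mu_k\bldv_{b_k}.
\]
The two sums range over disjoint index sets, so every coefficient is still bounded by $1$ in absolute value.

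The only delicate point is the Cramer/maximal-determinant bound, together with making sure the induction is applied to arbitrary rather than only spanning families, which the reduction step guarantees. The rest is bookkeeping.
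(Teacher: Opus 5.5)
Your proof is correct and follows essentially the same route as the paper: reduce rank-deficient families to a lower dimension, choose an $r$-subset of maximal $|\det|$ so that Cramer's rule bounds every coordinate of each remaining vector by $1$, then apply the $(t-1)$ case to the remaining at least $(t-1)r+1$ vectors and add the two relations, whose index sets are disjoint. The only differences are cosmetic: the paper handles $t=1$ by a plain linear dependence divided by its largest coefficient and adds a separate $r=1$ base case, whereas your maximal-determinant step covers $t=1$ in every dimension uniformly.
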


\begin{proof}
    We proceed by double induction on \( (t,r) \). 

When \( r = 1 \), then  \( \boldsymbol{v}_0, \boldsymbol{v}_1, \dots, \boldsymbol{v}_{t} \) are \( t+1 \) real numbers. Take \( \boldsymbol{v}_i \) as the one with the smallest absolute value, that is \( |\boldsymbol{v}_i| = \min\{|\boldsymbol{v}_0|, |\boldsymbol{v}_1|, \dots, |\boldsymbol{v}_{t}|\} \). 
If $\min|\bldv_i|=0$, then the conclusion holds by taking all $\lambda_j=0$. 
Otherwise, let 
$$
\lambda_j=\left\{\begin{array}{ll}
    \frac{v_i}{v_j}, & \textup{if}~ 0\le j\le t,j\ne i,  \\
    0, & \textup{if}~ j\ge t+1.
\end{array}\right.
$$
Then the conclusion holds.   

When \( t = 1 \), the vectors \( \boldsymbol{v}_0, \boldsymbol{v}_1,\dots,\boldsymbol{v}_N \) are linearly dependent. Thus, there exist constants \(c_0, c_1, \dots, c_N\), not all zero, such that
\[
\sum\limits_{i=0}^Nc_i\boldsymbol{v}_i=0.
\]
Take \( |c_i| \) to be the maximum among \( |c_0|, |c_1|,\dots, |c_N| \). Then we can rearrange the equation above as
\[
\boldsymbol{v}_i =\sum\limits_{j\ne i}\left( -\frac{c_j}{c_i}\right)\boldsymbol{v}_j,
\]
where \( |\lambda_j| = \left| -\dfrac{c_j}{c_i} \right| \le 1 \) for $j\in[N+1]\setminus\{i\}$. 

This completes the base case.

Assume the conclusion holds for \( (\le t-1,\le r) \) and \((\le t,\le r-1)\). We now consider the case for \( (t,r) \). If $\textup{rank}(\boldsymbol{v}_0, \boldsymbol{v}_1, \dots, \boldsymbol{v}_N)=w\le r-1$, there exists non-singular linear transformation $\mathcal{A}$, such that $\mathcal A\boldsymbol{v}_i=\begin{pmatrix}
    \boldsymbol{u}_i\\
    \boldsymbol{0}
\end{pmatrix}$ for $i\in[N+1]$, where $\boldsymbol{u}_i\in\R^w$. By the inductive hypothesis for \((t,\le r-1)\), there exists $i\in[N+1]$ such that
$$
t\boldsymbol{u}_i=\sum\limits_{j\ne i}\lambda_j\boldsymbol{u}_j,
$$
where $|\lambda_j|\le 1$. Thus, we have that
$$
t\mathcal{A}\boldsymbol{v}_i=\sum\limits_{j\ne i}\lambda_j\mathcal{A}\boldsymbol{v}_j.
$$
Applying \( \mathcal{A}^{-1} \) to both sides, we obtain
$$
t\boldsymbol{v}_i=\sum\limits_{j\ne i}\lambda_j\boldsymbol{v}_j.
$$
The conclusion holds for this case.

Thus we can assume that $\textup{rank}(\boldsymbol{v}_0, \boldsymbol{v}_1, \dots, \boldsymbol{v}_{N})=r$. Since the order of the vectors does not affect the conclusion, without loss of generality, among \(|\det(\boldsymbol{v}_{i_0}, \boldsymbol{v}_{i_1}, \dots, \boldsymbol{v}_{i_{r-1}})|\), we assume that \(\boldsymbol{v}_0, \boldsymbol{v}_1, \dots, \boldsymbol{v}_{r-1}\) attain its maximum. Define a non-singular linear transformation \( \mathcal{A} \) such that
   \[
   \mathcal{A}\boldsymbol{v}_i = \boldsymbol{e}_i, \quad i\in[r],
   \]
   where $\{\boldsymbol{e}_0,\boldsymbol{e}_1,\dots,\boldsymbol{e}_{r-1}\}$ are the standard basis vectors of $\R^r$. 
   
   For $j\ge r$,  let $\boldsymbol{u}_j=\mathcal{A}\boldsymbol{v}_j=(u_{j0},u_{j1},\dots,u_{j(r-1)})^T\in\R^r$. Since
   \[
   |\det(\mathcal{A}\boldsymbol{v}_{i_0}, \mathcal{A}\boldsymbol{v}_{i_1},\dots,\mathcal{A}\boldsymbol{v}_{i_{r-1}})| = |\det(\mathcal{A})| \cdot |\det(\boldsymbol{v}_{i_0}, \boldsymbol{v}_{i_1}, \dots, \boldsymbol{v}_{i_{r-1}})|,
   \]
   by the maximality of \( |\det(\boldsymbol{v}_0, \boldsymbol{v}_1, \dots, \boldsymbol{v}_{r-1})| \), for all \( j\ge r\) and $0\le l\le r-1$, we have that 
   \begin{align*}
       |u_{jl}|&=|\det(\boldsymbol{e}_0,\dots,\boldsymbol{e}_{l-1},\boldsymbol{u}_{j},\boldsymbol{e}_{l+1},\dots,\boldsymbol{e}_{r-1})|\\
       &=|\det(\mathcal{A})|\cdot|\det(\boldsymbol{v}_0,\dots,\boldsymbol{v}_{l-1},\boldsymbol{v}_{j},\boldsymbol{v}_{l+1},\dots,\boldsymbol{v}_{r-1})|\\
       &\le|\det(\mathcal{A})|\cdot|\det(\boldsymbol{v}_0,\boldsymbol{v}_1,\dots,\boldsymbol{v}_{r-1})|\\
       &=|\det(\boldsymbol{e}_0,\boldsymbol{e}_1,\dots,\boldsymbol{e}_{r-1})|=1.
   \end{align*}

   Since \(N-r \ge (t-1)r\), the inductive hypothesis for \((t-1,r)\)
applies to the vectors
\(\boldsymbol u_r,\boldsymbol u_{r+1},\ldots,\boldsymbol u_N\).
Therefore, there exists an index \(i\ge r\) such that
\begin{align}\label{equ1}
(t-1)\boldsymbol u_i 
=
\sum_{\substack{j\ne i,~ j\ge r}}
\lambda_j \boldsymbol u_j,
\end{align}
where \(|\lambda_j|\le 1\) for all \(j\ne i\) with \(j\ge r\).
Additionally, since $\{\boldsymbol{e}_0,\boldsymbol{e}_1,\dots,\boldsymbol{e}_{r-1}\}$ are the standard basis vectors, then
   \begin{align}\label{equ2}
       \boldsymbol{u}_i =\sum\limits_{l=0}^{r-1}u_{il}\boldsymbol{e}_l.
   \end{align}
 Combining $(\ref{equ1})$ and $(\ref{equ2})$, we get
   \[
   t\boldsymbol{u}_i=\sum\limits_{l=0}^{r-1}u_{il}\boldsymbol{e}_l+\sum_{j \neq i,~j\ge r} \lambda_j \boldsymbol{u}_j.
   \]
Applying \( \mathcal{A}^{-1} \) to both sides, since \( \mathcal{A}^{-1}\boldsymbol{e}_l = \boldsymbol{v}_l \) and \( \mathcal{A}^{-1}\boldsymbol{u}_j = \boldsymbol{v}_j \), we obtain
   \[
   t \boldsymbol{v}_i = \sum\limits_{l=0}^{r-1}u_{il}\boldsymbol{v}_l + \sum_{j \neq i,~j\ge r} \lambda_j \boldsymbol{v}_j.
   \]
By the bounds \( |u_{il}| \le 1 \) and \( |\lambda_j| \le 1 \), the conclusion holds for $(t,r)$.
\end{proof}

\begin{theorem}
        Let \( k \) and \( r \) be positive integers, and \( k = qr + s \), where \( q \) and \( s \) are non-negative integers and \( 1\le s \le r \). Let \( \cC \) be a linear $[k+r,k]$ code over $\R$. Then 
        $$
        h_1(\cC)\ge q+1.
        $$
    \end{theorem}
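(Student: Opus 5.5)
We apply Lemma~\ref{lal} to the columns of a parity-check matrix of $\cC$ and turn the resulting linear relation into a codeword whose largest entry is large relative to its second largest. Let $H=(\bldh_0,\ldots,\bldh_{n-1})$ be an $r\times n$ parity-check matrix of $\cC$, where $n=k+r=(q+1)r+s$. Since $s\ge 1$, we have $n-1=(q+1)r+s-1\ge (q+1)r$. Hence Lemma~\ref{lal} applies to the $N+1=n$ vectors $\bldh_0,\ldots,\bldh_{n-1}\in\R^r$ with $t=q+1$ and $N=n-1\ge tr$. It yields an index $i\in[n]$ and coefficients $\lambda_j$ with $|\lambda_j|\le 1$ such that
\[
(q+1)\bldh_i=\sum_{j\ne i}\lambda_j\bldh_j .
\]

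Next we form the vector $\boldsymbol{c}$ with $c_i=q+1$ and $c_j=-\lambda_j$ for $j\ne i$. Then $H\boldsymbol{c}=\boldsymbol{0}$, so $\boldsymbol{c}\in\cC$, and $\boldsymbol{c}\ne\boldsymbol{0}$ since $c_i\ne0$. Every entry other than the $i$-th has absolute value at most $1<q+1$. So the largest entry in absolute value is $c_i$, and the second largest satisfies $|x_{\pi(1)}|\le 1$.

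There are two cases. If all $\lambda_j=0$, then $\boldsymbol{c}$ has weight one, so $h_1(\boldsymbol{c})=\infty$. Thus $h_1(\cC)=\infty\ge q+1$, consistent with~\eqref{eq:dC} since $d(\cC)=1$. Otherwise the second largest entry is nonzero with absolute value at most $1$, and
\[
h_1(\boldsymbol{c})=\frac{q+1}{|x_{\pi(1)}|}\ge q+1 .
\]
Taking the maximum over nonzero codewords gives $h_1(\cC)\ge h_1(\boldsymbol{c})\ge q+1$.

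Essentially all the difficulty sits in Lemma~\ref{lal}, which we may assume. The remaining work is bookkeeping. The main checkpoint is the inequality $N\ge tr$: we must confirm that $n-1\ge (q+1)r$ follows exactly from $s\ge1$ in the division convention $k=qr+s$ with $1\le s\le r$. This also shows $q+1=\lceil k/r\rceil$, which gives~\eqref{eq:h1c}.
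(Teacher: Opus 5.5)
Your proof is correct and is essentially the paper's argument: you apply Lemma~\ref{lal} with $t=q+1$ and $N=k+r-1\ge (q+1)r$ to the columns of $H$, then read off the codeword with $i$-th entry $q+1$ and remaining entries $-\lambda_j$; your explicit treatment of the case where all $\lambda_j=0$ is slightly more careful than the paper's implicit division by $\max_{j\ne i}|\lambda_j|$. One minor slip: for $q=0$ the inequality $1<q+1$ is false, but $|c_j|\le 1\le q+1$ gives the same conclusion, and in that case the bound $h_1\ge 1$ is trivial anyway.
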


\begin{proof}
Let $H_{r\times (k+r)}$ be the parity-check matrix of $\cC$, $\boldsymbol{h}_0,\boldsymbol{h}_1,\dots,\boldsymbol{h}_{k+r-1}$ be the column vectors of $H$. Since
$$
k+r=(q+1)r+s\ge (q+1)r+1,
$$
by Lemma $\ref{lal}$ there exists $i$ such that
$$
(q+1)\boldsymbol{h}_i=\sum_{j\ne i}\lambda_j\boldsymbol{h}_j,
$$
where $|\lambda_j|\le 1$ for $j\in[k+r]\setminus\{i\}$. Thus
$$
\boldsymbol{x}=(-\lambda_0,\dots,-\lambda_{i-1},q+1,-\lambda_{i+1},\dots,-\lambda_{k+r-1})\in\cC.
$$
By the definition of $h_1(\cC)$, we have
$$
h_1({\cC}) = \max_{\boldsymbol{c} \in {\cC} \setminus \{0\}} h_1(\boldsymbol{c})\ge h_1(\boldsymbol{x})=\frac{q+1}{\max_{j\ne i}\{|\lambda_j|\}}\ge q+1. 
$$
    
\end{proof}

\begin{remark}
    \begin{enumerate}
        \item When $r=1$, our proof is essentially identical to that of Proposition $5$ in \cite{ref1}.
        \item When $r\mid k$, we provide a new proof that is completely different from and simpler than that in \cite{jiang2026tightlowerbound}. Moreover, our method also applies to the general case when $r\nmid k$.
    \end{enumerate}
\end{remark}

\section{Lower bound on $\Gamma_2(n,n-2)$}
\label{sec:cal}
Let $\cC$ be a linear $[n,k]$ code with parity-check matrix $H$. 
Denote $S\eqdef \set*{H\boldsymbol{x}  :  \boldsymbol{x} \in Q(n,1) } $. For any two distinct indices $i,j \in \Int{n}$, let  $\Delta_{(i,j)}$ denote the smallest
			$\Delta \in \mathbb{R}^+$ such that
			for  every pair $(e,e') \in \mathbb{R}^2$
			with $|e| > \Delta$, the translated sets
			\begin{equation}
				e \cdot \bldh_i + S \quad \text{and} \quad e' \cdot \bldh_{j} + S
			\end{equation}
			are disjoint; see, for example, Figure~\ref{fig:placeholder}.
            \begin{figure}
                \centering
                \includegraphics[width=0.4\linewidth]{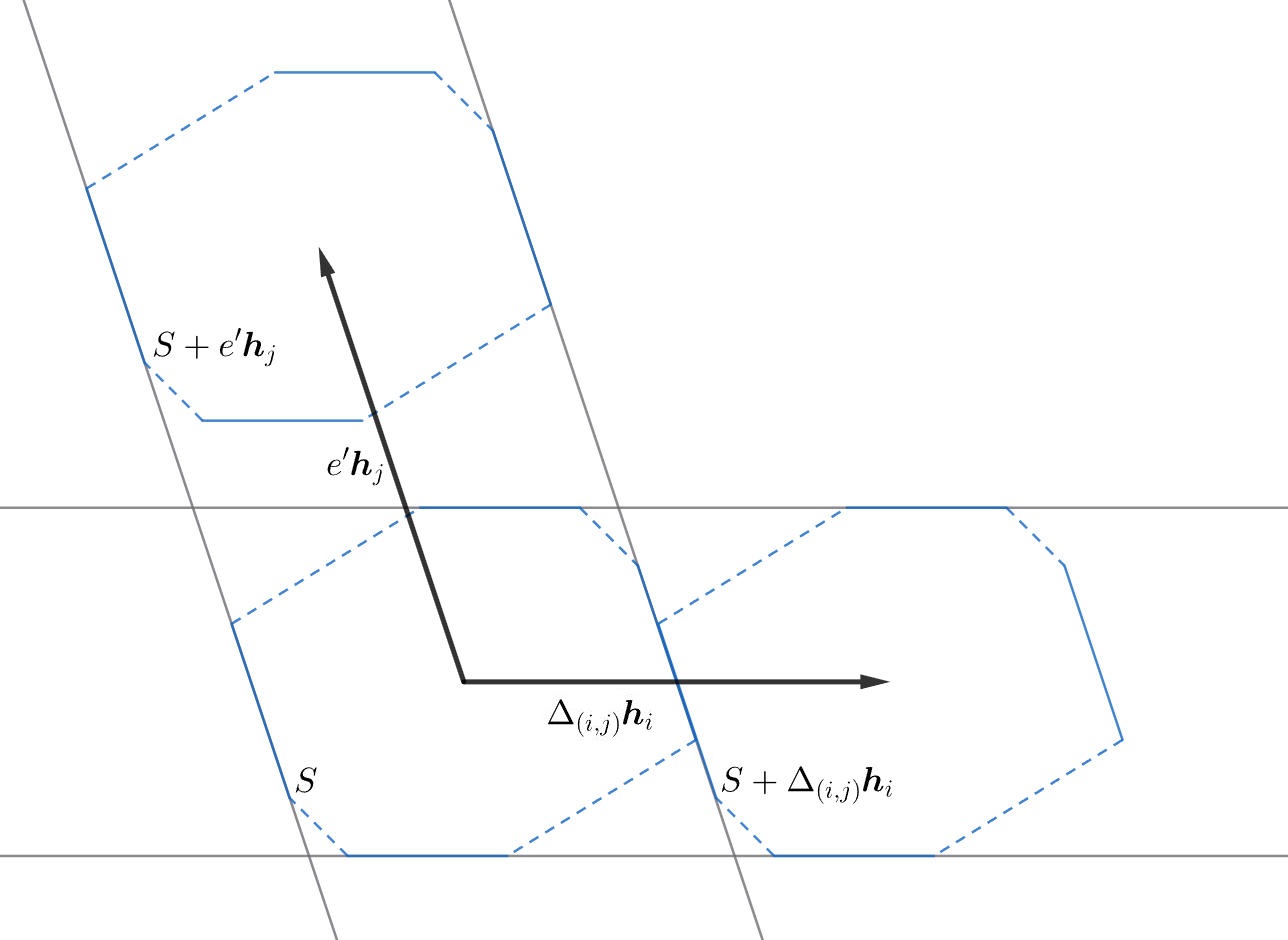}
                \caption{Illustration of $\Delta_{(i,j)}$}
                \label{fig:placeholder}
            \end{figure}

The values of
$\Delta_{(i,j)}$ can be explicitly computed from the columns of $H$, as stated below.

\begin{lemma}   Given $H=(\bldh_0,\bldh_1,...,\bldh_{n-1})\in\R^{2 \times n}$,
	where the indices are taken modulo $n$,
	then for any two distinct indices $i,j\in[n]$, we have that
\begin{equation}\label{eq:delta-general}
\Delta_{(i,j)}
=
2\sum_{k=0}^{n-1}
\frac{
\|\bldh_k\|
\,\sin\angle (\bldh_k,\bldh_j)
}{
\|\bldh_i\|
\,\sin\angle (\bldh_i,\bldh_j)
}.
\end{equation}
\end{lemma}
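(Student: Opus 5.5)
The plan is to project onto the direction normal to $\bldh_j$, which kills the free parameter $e'$ and turns the problem into a one-dimensional one. The two sets $e\bldh_i+S$ and $e'\bldh_j+S$ intersect for some $e'$ if and only if $e\bldh_i\in \set{e'\bldh_j: e'\in\R}+2S$. This is equation (\ref{form2}) with $e'$ free.

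Fix a unit vector $\bldu\perp\bldh_j$. The set $D=\set{e'\bldh_j}$ is an affine line (through the origin), and $e\bldh_i-2S$ is convex and compact. By Theorem~\ref{thm:separ}, these are disjoint iff some $\bldu'\neq 0$ separates them. Since $D$ is a full line, any separating functional must be constant on $D$. Hence $\bldu'$ is proportional to $\bldu$, and the separation must be strict because $S$ is compact; one can also see this directly in the plane.

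So disjointness for all $e'$ is equivalent to
\[
\bigl|\ip{\bldu}{e\bldh_i}\bigr| > \max_{\blds\in 2S}\ip{\bldu}{\blds}=2\max_{\bldx\in Q(n,1)}\sum_k x_k\ip{\bldu}{\bldh_k}=2\sum_{k}|\ip{\bldu}{\bldh_k}|.
\]
Here I use the central symmetry of $S$ and the fact that the support function of a zonotope is the sum of the absolute values.

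Next, substitute $|\ip{\bldu}{\bldh_k}|=\|\bldh_k\|\sin\angle(\bldh_k,\bldh_j)$, since $\bldu$ is a unit normal to $\bldh_j$. The disjointness condition then reads
\[
|e|\,\|\bldh_i\|\sin\angle(\bldh_i,\bldh_j) > 2\sum_k \|\bldh_k\|\sin\angle(\bldh_k,\bldh_j).
\]
The infimal threshold $\Delta$ with "for all $|e|>\Delta$, disjoint" is therefore exactly the ratio in (\ref{eq:delta-general}). This uses $\sin\angle(\bldh_i,\bldh_j)>0$ for nonparallel columns, which we may assume as in the reduction after Proposition~\ref{meanofD}. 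The term $k=j$ contributes zero, and the term $k=i$ is included.

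The one step that requires care is the boundary case, i.e., whether the defining condition "$|e|>\Delta$" aligns with strict inequality. At equality the sets touch, since the maximum over the compact $2S$ is attained, so they intersect. Hence the smallest valid $\Delta$ is exactly the ratio, and the minimum in the definition of $\Delta_{(i,j)}$ is attained.

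The converse direction also needs a check: if the strict inequality fails, there is an explicit $\bldx\in Q(n,1)$, with $x_k=\pm1$ chosen by sign, and an $e'$ making $e\bldh_i-e'\bldh_j\in 2S$. This follows because $2S$ projects onto an interval that contains $\ip{\bldu}{e\bldh_i}$. The line $e\bldh_i+\R\bldh_j$ then meets $2S$, since $S$ is a convex set whose $\bldu$-projection covers that value.
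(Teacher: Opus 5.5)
Your proposal is correct and follows essentially the same route as the paper: separation forces the normal $\bldu$ to be perpendicular to $\bldh_j$, the support function of $2S$ in direction $\bldu$ is $2\sum_k|\ip{\bldu}{\bldh_k}|$, and $|\ip{\bldu}{\bldh_k}|=\|\bldh_k\|\sin\angle(\bldh_k,\bldh_j)$. Your final-paragraph observation actually settles both directions by itself, since the line $e\bldh_i+\R\bldh_j$ is a level set of $\ip{\bldu}{\cdot}$ and meets $2S$ exactly when $\ip{\bldu}{e\bldh_i}$ lies in the projected interval; this makes Theorem~\ref{thm:separ} unnecessary, whereas the paper uses separation for necessity and an explicit strict functional for sufficiency.
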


\begin{proof}
Note that
\[
\left(e \cdot \bldh_i + S\right)  \cap \left( e' \cdot \bldh_{j} + S\right) = \emptyset
\quad\Longleftrightarrow\quad
e\bldh_i-e'\bldh_j\notin 2S.
\]

Since the set $L_e\triangleq\{\bldx:\bldx=e\bldh_i-e'\bldh_j, e'\in\R\}$ is an affine set,
by Theorem~\ref{thm:separ}, there exists $\bldu\neq0$ such that
\[
\langle\bldu,e\bldh_i-e'\bldh_j\rangle
\ge
b
\ge
\langle\bldu,\blds\rangle,\quad \blds\in2S.
\]
Taking the supremum over $\blds\in 2S$ yields
\[
\langle\bldu,e\bldh_i-e'\bldh_j\rangle
\ge
\sup_{\blds\in 2S} \langle\bldu,\blds\rangle
=
\sup_{|\epsilon_k|\le1}
\left\langle
\bldu,
\sum_{k=0}^{n-1}2\epsilon_k\bldh_k
\right\rangle=
2\sum_{k=0}^{n-1}
|\langle \bldu,\bldh_k\rangle|,
\]
i.e.,
\[
e\langle\bldu,\bldh_i\rangle
-
e'\langle\bldu,\bldh_j\rangle
\ge
2\sum_{k=0}^{n-1}
|\langle\bldu,\bldh_k\rangle|.
\]

Since $e'$ is arbitrary, the above inequality can hold for all
$e'\in\mathbb{R}$ only if
\[
    \langle \bldu,\bldh_j\rangle=0.
\]
Thus, $\bldu$ must be orthogonal to $\bldh_j$. In the two-dimensional case,
this means that $\bldu$ is parallel to a nonzero vector
$\bldh_j^\perp$ perpendicular to $\bldh_j$. Therefore,
\[
    |e|\cdot|\langle \bldh_j^\perp,\bldh_i\rangle|
    \ge
    2\sum_{k=0}^{n-1}
    |\langle \bldh_j^\perp,\bldh_k\rangle|,
\]
i.e.,
\[
|e|
    \ge
    \frac{2\sum_{k=0}^{n-1}
    |\langle \bldh_j^\perp,\bldh_k\rangle|}{|\langle \bldh_j^\perp,\bldh_i\rangle|}
    .
\]
Conversely, suppose that
\[
|e|>
\frac{2\sum_{k=0}^{n-1}|\langle \bldh_j^\perp,\bldh_k\rangle|}
{|\langle \bldh_j^\perp,\bldh_i\rangle|}.
\]
Choose the sign of $\bldh_j^\perp$ so that
$e\langle \bldh_j^\perp,\bldh_i\rangle>0$. For every $e'\in\mathbb R$,
\[
\langle \bldh_j^\perp, e \bldh_i-e' \bldh_j\rangle
=e\langle \bldh_j^\perp,\bldh_i\rangle
>
2\sum_{k=0}^{n-1}|\langle \bldh_j^\perp,\bldh_k\rangle|
=
\sup_{\blds\in 2S}\langle \bldh_j^\perp,\blds\rangle.
\]
Hence $e \bldh_i-e'\bldh_j\notin 2S$ for all $e'\in\mathbb R$.
This proves the claimed formula.
Consequently, 
\[
\begin{aligned}
\Delta_{(i,j)}
=
\frac{
    2\sum_{k=0}^{n-1}
    |\langle \bldh_j^\perp,\bldh_k\rangle|
}{
    |\langle \bldh_j^\perp,\bldh_i\rangle|
}  
=
2\sum_{k=0}^{n-1}
\frac{
    \|\bldh_k\|
    \sin\angle(\bldh_k,\bldh_j)
}{
    \|\bldh_i\|
    \sin\angle(\bldh_i,\bldh_j)
}.
\end{aligned}
\]
\end{proof}

The following lemma establishes a theoretical lower bound on $\Gamma_2(\cC)$ based on the observation that the maximum is no smaller than the average.

\begin{lemma} \label{lem:normalcase} 

Let $n\ge 3$ and $\cC$ be a linear $[n,n-2]$ code over $\R$ with a parity-check matrix $H=(\bldh_0 \ \bldh_1  \ \cdots \ \bldh_{n-1})$.
For each $i \in\Int{n-1}$, 
let $x_i\triangleq\angle (\bldh_i,\bldh_{i+1})$,
and define $x_{n-1}\triangleq\pi+\arg(\bldh_0)-\arg(\bldh_{n-1})$, where the indices are taken modulo $n$. 
Then $x_i>0$,
    $\sum_{i=0}^{n-1}x_i=\pi$,
    $\sin\angle(\bldh_i,\bldh_{i+1})=\sin x_i$ for all $i\in[n]$,
    and 
    \begin{equation}\label{eq:dltbndsine}
        \Gamma_2(\cC)
        \ge
        2\left(
            2+\sum_{p=3}^{n-1}
            \left(
            \prod_{i=0}^{n-1}\frac{\sin\left(\sum_{k=i}^{i+p-2}x_k\right)}{\sin x_i}
            \right)^{\frac{1}{n}}
            \right).
    \end{equation}

    In particular, for $n=3$, $\Gamma_2(\cC)\ge4$.
\end{lemma}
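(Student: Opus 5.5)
The plan is to bound $\Gamma_2(\cC)$ below by the average of the $n$ ``adjacent'' quantities $\Delta_{(i,i+1)}$, apply the explicit formula \eqref{eq:delta-general} to each, and then pass from the arithmetic mean to a geometric mean term by term.

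\textbf{Setup.} First I dispose of the elementary claims. If $\Gamma_2(\cC)=\infty$ there is nothing to prove. Otherwise $d(\cC)\ge 3$, so the columns of $H$ are pairwise nonparallel. As explained after Proposition~\ref{meanofD}, I may assume $0\le\arg(\bldh_0)<\cdots<\arg(\bldh_{n-1})<\pi$.

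Then $x_i=\arg(\bldh_{i+1})-\arg(\bldh_i)>0$ for $i\in[n-1]$. Also $x_{n-1}=\pi+\arg(\bldh_0)-\arg(\bldh_{n-1})>0$, and the sum telescopes to $\pi$.

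Since lines (not rays) are what matter for $\sin\angle$, I get $\sin\angle(\bldh_k,\bldh_l)=\sin\bigl(\sum_{m=k}^{l-1}x_m\bigr)$ for indices read cyclically from $k$ to $l$. This holds because every such partial sum lies in $(0,\pi)$ and replacing $\bldh_0$ by $-\bldh_0$ shifts the angle by $\pi$, which does not change $|\sin|$. In particular $\sin\angle(\bldh_i,\bldh_{i+1})=\sin x_i$, including the wrap-around case $i=n-1$.

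\textbf{Reduction to an average.} By Proposition~\ref{meanofD}, $\Gamma_2(\cC)=\max_{i\ne j}\Delta_{(i,j)}$. Hence
\[
\Gamma_2(\cC)\ge \max_{i\in[n]}\Delta_{(i,i+1)}\ge \frac1n\sum_{i\in[n]}\Delta_{(i,i+1)} .
\]
I apply \eqref{eq:delta-general} with $j=i+1$. The term $k=i$ equals $1$ and the term $k=i+1$ vanishes.

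I reindex the remaining terms by $k=i+2-p$ for $p=3,\dots,n$ (mod $n$). Going from $\bldh_{i+2-p}$ to $\bldh_{i+1}$ sweeps the consecutive gaps $x_{i+2-p},\dots,x_i$, so the term becomes
\[
T_{i,p}=\frac{\|\bldh_{i+2-p}\|}{\|\bldh_i\|}\cdot\frac{\sin\bigl(\sum_{m=i+2-p}^{i}x_m\bigr)}{\sin x_i}.
\]
Thus
\[
\frac1n\sum_i\Delta_{(i,i+1)}=2\Bigl(1+\sum_{p=3}^{n}\frac1n\sum_{i}T_{i,p}\Bigr).
\]

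\textbf{AM--GM step.} For each fixed $p$ I apply AM--GM: $\frac1n\sum_iT_{i,p}\ge\bigl(\prod_iT_{i,p}\bigr)^{1/n}$.

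In the product over the full cycle $i\in[n]$, the norm ratios cancel because $i\mapsto i+2-p$ is a bijection of $[n]$. The numerator $\prod_i\sin\bigl(\sum_{m=i+2-p}^{i}x_m\bigr)$ is, after the cyclic shift $i\mapsto i+p-2$, the same as $\prod_i\sin\bigl(\sum_{k=i}^{i+p-2}x_k\bigr)$. This gives exactly the $p$-th summand of \eqref{eq:dltbndsine}.

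For $p=n$ the window has $n-1$ gaps, so its sine is $\sin(\pi-x_{i+1})=\sin x_{i+1}$. The product therefore telescopes to $1$, contributing the second ``$1$''. This yields $2\bigl(2+\sum_{p=3}^{n-1}\cdots\bigr)$. For $n=3$ the sum is empty and the bound reads $\Gamma_2(\cC)\ge4$.

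\textbf{Main obstacle.} The only delicate point is the bookkeeping of angles under the wrap-around, namely that $\sin\angle(\bldh_k,\bldh_{i+1})$ really equals the sine of the cyclic partial sum when the window crosses index $n-1\to0$. I will handle this by the sign-flip remark above: the angle between two lines is insensitive to negating a column.
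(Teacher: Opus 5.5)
Your proposal is correct and follows essentially the paper's route: bound $\Gamma_2(\cC)$ below by the mean of the adjacent quantities $\Delta_{(i,i+1)}$, expand via \eqref{eq:delta-general}, and apply AM--GM separately for each cyclic shift so that the norm ratios telescope. The paper instead averages over both $\Delta_{(i,i+1)}$ and $\Delta_{(i,i-1)}$, uses a $2n$-term AM--GM, and peels off the one-gap window rather than your $(n-1)$-gap window. That symmetrization is redundant, since, as you observe, the one-sided family already yields the same geometric means. Your explicit check of $x_i>0$, $\sum_i x_i=\pi$, and the wrap-around sine identity also fills in steps the paper leaves implicit.
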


\begin{proof}
According to Proposition~\ref{meanofD} and the definition of $\Delta_{(i,j)}$, we have 
\begin{align}
        \Gamma_2(\cC)
        &=
            \max_{i,j\in[n],i\neq j} \Delta_{(i,j)}\nonumber\\
		  &\ge
		      \max \{\max_{i\in[n]} \Delta_{(i,i+1)}, \max_{{i\in[n]}} \Delta_{(i,i-1)}\}\nonumber\\
	      &\ge
		      \frac{\sum_{i=0}^{n-1}\left(\Delta_{(i,i+1)}+\Delta_{(i,i-1)}\right)}{2n} \nonumber\\
        &=   
            \frac{1}{n}\sum_{i=0}^{n-1}\sum_{j=0}^{n-1}
            \frac{||\bldh_j||\sin\angle (\bldh_j,\bldh_{i+1})}{||\bldh_i||\sin\angle (\bldh_i,\bldh_{i+1})}
            +
            \frac{1}{n}\sum_{i=0}^{n-1}\sum_{j=0}^{n-1}
            \frac{||\bldh_j||\sin\angle (\bldh_j,\bldh_{i-1})}{||\bldh_i||\sin\angle (\bldh_i,\bldh_{i-1})}
            \nonumber\\
        &=
            \frac{1}{n}
            \sum_{p=0}^{n-1}\sum_{i=0}^{n-1}\left(
            \frac{||\bldh_{i+p}||\sin\angle (\bldh_{i+p},\bldh_{i+1})}{||\bldh_{i}||\sin\angle (\bldh_{i},\bldh_{i+1})}
            +
            \frac{||\bldh_{i-p}||\sin\angle (\bldh_{i-p},\bldh_{i-1})}{||\bldh_{i}||\sin\angle (\bldh_{i},\bldh_{i-1})}\right)\nonumber\\
        &=
            \frac{1}{n}
            \left(
            2n+0+\sum_{p=2}^{n-1}\sum_{i=0}^{n-1}
            \left(\frac{||\bldh_{i+p}||\sin\angle (\bldh_{i+p},\bldh_{i+1})}{||\bldh_{i}||\sin\angle (\bldh_{i},\bldh_{i+1})}
            +
            \frac{||\bldh_{i-p}||\sin\angle (\bldh_{i-p},\bldh_{i-1})}{||\bldh_{i}||\sin\angle (\bldh_{i},\bldh_{i-1})}\right)
            \right)
        \nonumber\\
        &\ge
            \frac{1}{n}
            \left(
            2n+\sum_{p=2}^{n-1}2n
            \left(
            \prod_{i=0}^{n-1}\frac{||\bldh_{i+p}||\sin\angle (\bldh_{i+p},\bldh_{i+1})}{||\bldh_{i}||\sin\angle (\bldh_{i},\bldh_{i+1})}\frac{||\bldh_{i-p}||\sin\angle (\bldh_{i-p},\bldh_{i-1})}{||\bldh_{i}||\sin\angle (\bldh_{i},\bldh_{i-1})}
            \right)^{\frac{1}{2n}}
            \right)
            \label{inequality:n=2n}
        \\
        &=
            \frac{1}{n}
            \left(
            2n+\sum_{p=2}^{n-1}2n
            \left(
            \prod_{i=0}^{n-1}\frac{\sin\angle (\bldh_{i+1},\bldh_{i+p})}{\sin\angle (\bldh_{i},\bldh_{i+1})}\frac{\sin\angle (\bldh_{i-p},\bldh_{i-1})}{\sin\angle (\bldh_{i-1},\bldh_{i})}
            \right)^{\frac{1}{2n}}
            \right)
        \nonumber\\
        &=
            2\left(
            1+\sum_{p=2}^{n-1}
            \left(\prod_{i=0}^{n-1}
                \frac
                {\sin\angle (\bldh_{i-p},\bldh_{i-1})}
                {\sin\angle (\bldh_{i},\bldh_{i+1})}
            \right)^{\frac{1}{n}}
            \right)
            \nonumber\\
        &=
            2\left(
            1+1+
            \sum_{p=3}^{n-1}
            \left(\prod_{i=0}^{n-1}
                \frac
                {\sin\angle (\bldh_{i},\bldh_{i+p-1})}
                {\sin\angle (\bldh_{i},\bldh_{i+1})}
            \right)^{\frac{1}{n}}
            \right)\nonumber\\
        &=
            2\left(
            2+\sum_{p=3}^{n-1}
            \left(
            \prod_{i=0}^{n-1}\frac{\sin\left(\sum_{k=i}^{i+p-2}x_k\right)}{\sin x_i}
            \right)^{\frac{1}{n}}
            \right),
            \nonumber
\end{align}
where \eqref{inequality:n=2n} comes from the AM-GM inequality of case  $2n$.
\end{proof}

\begin{remark}\label{remark:neqal3}
    For \(n=3\), Lemma~\ref{lem:normalcase} implies \(\Gamma_2(\mathcal{C}) \ge 4\); in this case the code $\cC(3)$ in \eqref{construction} with $\Gamma_2(\cC)\le \frac{1}{\sin^2(\pi/6)}$ is optimal.
\end{remark}

We will estimate the quantity
$
\left(
\prod_{i=0}^{n-1}
\frac{\sin\left(\sum_{k=i}^{i+p-2}x_k\right)}
{\sin x_i}
\right)^{\frac{1}{n}}.
$
This estimate leads to the desired lower bound.
Its proof, however, is rather involved and relies on a sequence of auxiliary results.
We therefore defer the proof to the next section.

\begin{theorem}[Cyclic Sine-Product Inequality]\label{thm:core}
    Let $n\ge4$.
    Let $x_0,x_1,...,x_{n-1}>0$,
    $\sum_{i=0}^{n-1} x_i=\pi$,
    where the indices are taken modulo $n$.
    For $0\le p\le n$, define $\cX_i^{(p)}\triangleq x_i+x_{i+1}+\cdots+x_{i+p-1}$,
where $\cX_i^{(0)}=0$ and $\cX_i^{(n)}=\pi$. Then, for every $0\le p\le n$,
\begin{equation}\label{eq:core}
    \left(\prod_{i=0}^{n-1} \frac{\sin\cX_i^{(p)}}{\sin x_i}\right)^{\frac{1}{n}}
\ge
    \frac{\sin(p\pi/n)}{\sin(\pi/n)}.
\end{equation}
\end{theorem}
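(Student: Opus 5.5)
The plan is to derive a recursive inequality between the cyclic products for consecutive values of $p$ from a three-term sine identity, and then compare the resulting nonlinear recurrence with the Chebyshev-type recurrence satisfied by the right-hand side of \eqref{eq:core}.

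\textbf{Step 1 (reduction and trivial cases).} For $0\le p\le n$ set
\[
P_p\eqdef\prod_{i=0}^{n-1}\sin\cX_i^{(p)},\qquad g_p\eqdef (P_p/P_1)^{1/n},\qquad u_p\eqdef \frac{\sin(p\pi/n)}{\sin(\pi/n)} .
\]
Since $x_i>0$ and $\sum_i x_i=\pi$, every $\cX_i^{(p)}$ with $1\le p\le n-1$ lies in $(0,\pi)$, so $g_p>0$ there. At the endpoints $g_0=g_n=0=u_0=u_n$, and $g_1=1=u_1$, so \eqref{eq:core} holds trivially for $p\in\{0,1,n\}$. Moreover
\[
\sin\cX_i^{(n-p)}=\sin\bigl(\pi-\cX_{i+n-p}^{(p)}\bigr)=\sin\cX_{i+n-p}^{(p)},
\]
so reindexing the cyclic product gives $g_{n-p}=g_p$. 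Since also $u_{n-p}=u_p$, it suffices to treat $2\le p\le n/2$.

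\textbf{Step 2 (Ptolemy-type identity and a recurrence inequality).} I use the elementary identity
\[
\sin(a+b)\sin(b+c)=\sin a\sin c+\sin b\sin(a+b+c).
\]
Apply it with $a=x_i$, $b=\cX_{i+1}^{(p-1)}$, $c=x_{i+p}$ to get
\[
\sin\cX_i^{(p)}\sin\cX_{i+1}^{(p)}=\sin x_i\sin x_{i+p}+\sin\cX_{i+1}^{(p-1)}\sin\cX_i^{(p+1)}.
\]
All terms are nonnegative for $1\le p\le n-1$. Take the product over $i\in[n]$ and use the superadditivity of the geometric mean, $\prod_i(a_i+b_i)^{1/n}\ge\prod_i a_i^{1/n}+\prod_i b_i^{1/n}$, which is Hölder or AM--GM. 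Cyclic reindexing turns each side into the appropriate $P$'s, and dividing by $P_1^{2/n}$ yields
\[
g_p^2\ \ge\ 1+g_{p-1}\,g_{p+1},\qquad 1\le p\le n-1 .
\]
The comparison sequence satisfies this with equality: $u_p^2=1+u_{p-1}u_{p+1}$, since $\sin^2 p\theta-\sin(p-1)\theta\sin(p+1)\theta=\sin^2\theta$.

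\textbf{Step 3 (discrete comparison, the main obstacle).} It remains to show that a positive sequence with $g_0=g_n=0$, $g_1=1$, symmetric, and satisfying $g_p^2-g_{p-1}g_{p+1}\ge1$, must dominate $u_p$. The inequality makes the ratios $g_{p+1}/g_p$ strictly decrease, and more strongly than for $u$.

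My first attempt is a Sturm-type argument on $D_p\eqdef g_pu_{p+1}-g_{p+1}u_p$, starting from $D_0=0$. I want to show that if $g_p<u_p$ at some first interior index, then the forward propagation $g_{p+1}\le (g_p^2-1)/g_{p-1}$ keeps $g/u$ decreasing. This would force $g$ to vanish or turn negative strictly before index $n$, contradicting positivity on $[1,n-1]$. The delicate point is that the recurrence only bounds $g_{p+1}$ from above, so falling below $u$ must be shown to be self-reinforcing. I would verify this by rewriting the condition in terms of the ratios $r_p=g_{p+1}/g_p$ and comparing with $\rho_p=u_{p+1}/u_p$ via the map $r\mapsto r'\le r-1/(g_p^2 r')$. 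If this runs into trouble, I would also exploit the symmetry $g_{n-p}=g_p$ to propagate from both ends.

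\textbf{Fallback.} If the comparison cannot be closed, I would minimize $\log P_p-\log P_1$ directly over the open simplex. I would show that the objective tends to $+\infty$ at the boundary, using that the factor $\sin x_i$ in the denominator tends to zero faster than the numerators, which needs care when several $x_i$ vanish simultaneously. Then I would check via Lagrange multipliers and cyclic symmetry that the uniform point $x_i=\pi/n$ is the unique critical point, where equality in \eqref{eq:core} holds.
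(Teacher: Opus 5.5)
Your Steps 1 and 2 match the paper's argument exactly: the same identity \eqref{eq:sinid}, the same superadditivity of the geometric mean (Lemma~\ref{lem:minkowskitypeinequality}), and the same normalized recurrence $g_p^2\ge 1+g_{p-1}g_{p+1}$, compared against $u_p^2=1+u_{p-1}u_{p+1}$. The gap is Step 3. This is the only place where something nontrivial must be proved, and you leave it as a tentative plan.

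The mechanism you describe does not work as stated. The claim that the ratios $g_{p+1}/g_p$ decrease ``more strongly than for $u$'' is unjustified. The recurrence gives $g_{p+1}/g_p\le (g_p/g_{p-1})(1-1/g_p^2)$, while $u_{p+1}/u_p=(u_p/u_{p-1})(1-1/u_p^2)$. The factor for $g$ is the weaker one exactly when $g_p>u_p$. Also, a decreasing ratio $g_p/u_p$ does not by itself force $g$ to vanish or become negative before index $n$. The Lagrange-multiplier fallback is not a proof either, because uniqueness of the critical point on the simplex is the whole difficulty and is only asserted.

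The missing step is a short discrete maximum principle, which is how the paper closes the argument. Let $q=\min_{1\le p\le n-1} g_p/u_p$, which is positive, and suppose $q<1$. Pick a minimizer $p$. Then $g_{p\pm1}\ge q\,u_{p\pm1}\ge 0$; this holds trivially when $p\pm1\in\{0,n\}$, since both sides vanish. Hence
\[
q^2u_p^2=g_p^2\ge 1+g_{p-1}g_{p+1}\ge 1+q^2u_{p-1}u_{p+1}=1+q^2(u_p^2-1),
\]
so $q^2\ge 1$, a contradiction.

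Your forward route can also be repaired with the right invariant. Write $w_k=g_k/u_k$ and use $u_k\ge 1$ for $1\le k\le n-1$. One checks that $w_k\le w_{k-1}$ together with $w_k<1$ implies $w_{k+1}<w_k$. Start from the first index $m\ge 2$ with $w_m<1$, where automatically $w_{m-1}\ge 1>w_m$. Propagating gives $w_{n-1}<1$, that is, $g_{n-1}<1$. This contradicts $g_{n-1}^2\ge 1+g_{n-2}g_n=1$, or equivalently $g_{n-1}=g_1=1$ from your symmetry. So the contradiction comes from the endpoint value $g_{n-1}$, not from $g$ vanishing early.
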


Our lower bound for the analog code also relies on the following auxiliary lemma.
    \begin{lemma}\label{lem:simplify}
    For any integer $n \ge 2$, we have
        \begin{equation}\label{sinsumdivi}
            \frac{\sum_{i=1}^{n-1}\sin \frac{i\pi}{n}}{\sin\frac{\pi}{n}}=\frac{1}{2\sin^2\frac{\pi}{2n}}.
        \end{equation}
    \end{lemma}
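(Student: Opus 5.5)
The plan is to evaluate the numerator in closed form and then simplify the resulting ratio with half-angle identities. Set $\theta=\pi/n$.

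For the numerator, one route is the standard telescoping identity
\[
2\sin\tfrac{\theta}{2}\,\sin(i\theta)=\cos\bigl((i-\tfrac12)\theta\bigr)-\cos\bigl((i+\tfrac12)\theta\bigr).
\]
Summing over $i=1,\dots,n-1$, the sum telescopes and gives
\[
2\sin\tfrac{\theta}{2}\sum_{i=1}^{n-1}\sin(i\theta)=\cos\tfrac{\theta}{2}-\cos\bigl((n-\tfrac12)\theta\bigr).
\]
Since $n\theta=\pi$, we have $\cos\bigl((n-\tfrac12)\theta\bigr)=\cos\bigl(\pi-\tfrac{\theta}{2}\bigr)=-\cos\tfrac{\theta}{2}$. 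Hence
\[
\sum_{i=1}^{n-1}\sin\tfrac{i\pi}{n}=\frac{\cos(\theta/2)}{\sin(\theta/2)}=\cot\tfrac{\pi}{2n}.
\]
Alternatively, one can take the imaginary part of the geometric sum $\sum_{i=0}^{n-1}\omega^{i}$ with $\omega=e^{i\pi/n}$; this equals $\frac{1-\omega^n}{1-\omega}=\frac{2}{1-\omega}$, and the same value follows.

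For the ratio, divide by $\sin\frac{\pi}{n}=2\sin\frac{\pi}{2n}\cos\frac{\pi}{2n}$:
\[
\frac{\cot(\pi/2n)}{2\sin(\pi/2n)\cos(\pi/2n)}=\frac{1}{2\sin^2(\pi/2n)},
\]
which is \eqref{sinsumdivi}. Every denominator is nonzero because $0<\frac{\pi}{2n}<\frac{\pi}{2}$ when $n\ge2$.

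There is no real obstacle here. The only step needing care is evaluating the boundary term of the telescoping sum, where the relation $n\theta=\pi$ turns the endpoint cosine into $-\cos(\theta/2)$, so that the two endpoint terms add rather than cancel.
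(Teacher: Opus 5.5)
Your proof is correct and follows the paper's route. The paper obtains $\sum_{i=1}^{n-1}\sin\frac{i\pi}{n}=\cot\frac{\pi}{2n}$ by citing Lagrange's sine identity, while you derive the same closed form directly by telescoping. Both arguments then finish with the same double-angle simplification $\sin\frac{\pi}{n}=2\sin\frac{\pi}{2n}\cos\frac{\pi}{2n}$.
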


    \begin{proof}
    By Lagrange's sine identity \cite[Page 129]{Jeffrey1995HandbookOM},
    we obtain
    \[
    \sum_{i=1}^m \sin(ix)
        = \frac{\sin\frac{(m+1)x}{2}\sin\frac{mx}{2}}{\sin\frac{x}{2}}.
    \]
    
    Let $x=\frac{\pi}{n}$ and $m=n-1$, then
    \[
    \sum_{i=1}^{n-1} \sin\frac{i\pi}{n}
    = \frac{\sin\frac{\pi}{2}\sin\frac{(n-1)\pi}{2n}}{\sin\frac{\pi}{2n}}
    = \frac{\cos\frac{\pi}{2n}}{\sin\frac{\pi}{2n}}.
    \]

Therefore,
\[
\frac{\sum_{i=1}^{n-1} \sin \frac{i\pi}{n}}{\sin\frac{\pi}{n}}
= \frac{\cos\frac{\pi}{2n}}{2\sin^2\frac{\pi}{2n}\cos\frac{\pi}{2n}}
= \frac{1}{2\sin^2\frac{\pi}{2n}}.
\]
\end{proof}

	\begin{theorem}
		Let $n\ge3$ and let $\cC$ be a linear $[n,n-2]$ code over $\R$. 
		We have
		\begin{equation}
		\Gamma_2(\cC)
		\ge
		\frac{1}{\sin^2\frac{\pi}{2n}}.
		\end{equation}
	\end{theorem}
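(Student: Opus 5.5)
We combine Lemma~\ref{lem:normalcase}, Theorem~\ref{thm:core} and Lemma~\ref{lem:simplify}, treating $n=3$ separately.

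\emph{Case $n=3$.} Lemma~\ref{lem:normalcase} gives $\Gamma_2(\cC)\ge 4$. Since $\sin^2(\pi/6)=1/4$, this is exactly $1/\sin^2(\pi/6)$, as already noted in Remark~\ref{remark:neqal3}.

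\emph{Case $n\ge 4$.} We may assume $\Gamma_2(\cC)<\infty$. Then $d(\cC)\ge 3$ and the columns of $H$ are pairwise nonparallel. As explained after Proposition~\ref{meanofD}, we may order them by increasing polar angle in $[0,\pi)$. Lemma~\ref{lem:normalcase} then supplies angles $x_0,\dots,x_{n-1}>0$ with $\sum_i x_i=\pi$, together with the bound~\eqref{eq:dltbndsine}.

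In that bound, the $p$-th term involves $\sin\bigl(\sum_{k=i}^{i+p-2}x_k\bigr)$, a sum of $p-1$ consecutive angles. In the notation of Theorem~\ref{thm:core}, this is $\sin\cX_i^{(p-1)}$. Applying Theorem~\ref{thm:core} with $p-1$ in place of $p$, for each $p\in[3:n]$, gives
\[
\Bigl(\prod_{i=0}^{n-1}\frac{\sin\cX_i^{(p-1)}}{\sin x_i}\Bigr)^{1/n}\ge \frac{\sin((p-1)\pi/n)}{\sin(\pi/n)}.
\]
Reindex with $m=p-1$, so $m$ runs from $2$ to $n-2$. Write the leading constant $2$ in~\eqref{eq:dltbndsine} as the $m=1$ and $m=n-1$ terms of the same family: both equal $\sin(\pi/n)/\sin(\pi/n)=1$, because $\sin((n-1)\pi/n)=\sin(\pi/n)$. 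This yields
\[
\Gamma_2(\cC)\ge 2\sum_{m=1}^{n-1}\frac{\sin(m\pi/n)}{\sin(\pi/n)}.
\]
Lemma~\ref{lem:simplify} evaluates this sum as $\frac{1}{2\sin^2(\pi/2n)}$, so the right-hand side equals $1/\sin^2\frac{\pi}{2n}$.

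Once Theorem~\ref{thm:core} is granted, the argument is pure bookkeeping. The two points needing care are the index shift between $p$ in~\eqref{eq:dltbndsine} and $p$ in Theorem~\ref{thm:core}, and checking that the endpoint terms absorb the constant $2$. The real difficulty is the cyclic sine-product inequality itself, which is deferred to the next section.
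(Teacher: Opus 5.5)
Your proposal is correct and follows the paper's proof essentially step for step. The $n=3$ case goes through Remark~\ref{remark:neqal3}. For $n\ge 4$ the chain is \eqref{eq:dltbndsine}, then Theorem~\ref{thm:core} with the index shift $p\mapsto p-1$, then absorbing the constant $2$ as the $m=1$ and $m=n-1$ terms, then Lemma~\ref{lem:simplify}. Your explicit reduction to $\Gamma_2(\cC)<\infty$ (so that $d(\cC)\ge 3$ and the columns of $H$ are pairwise nonparallel) is a small piece of care the paper leaves implicit.
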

	\begin{proof}
    By Remark~\ref{remark:neqal3}, the theorem holds for \(n=3\).
    Thus, in the sequel we assume that \(n\ge 4\).
        \begin{align}
            \Gamma_2(\cC)
        &\overset{\eqref{eq:dltbndsine}}{\ge}
            2\left(
            2+\sum_{p=3}^{n-1}
            \left(
            \prod_{i=0}^{n-1}\frac{\sin\left(\sum_{k=i}^{i+p-2}x_k\right)}{\sin x_i}
            \right)^{\frac{1}{n}}
            \right)\nonumber\\
        &\overset{\eqref{eq:core}}{\ge}
            2\left(
            2+\sum_{p=3}^{n-1}\frac{\sin\frac{(p-1)\pi}{n}}{\sin\frac{\pi}{n}}
            \right)\nonumber\\
        &=
            2
            \sum_{i=1}^{n-1}\frac{\sin\frac{i\pi}{n}}{\sin\frac{\pi}{n}}\nonumber\\
        &\overset{\eqref{sinsumdivi}}{=}
            \frac{1}{\sin^2\frac{\pi}{2n}}.\nonumber
        \end{align}
	\end{proof} 

    Therefore, Roth's construction is optimal.
\section{Proof of Cyclic Sine Product Inequality}\label{sec:proof}
Our proof requires the following auxiliary lemmas.
\begin{lemma}\label{lem:minkowskitypeinequality}
If $A_i,B_i \ge 0$ for $i=0,\ldots,n-1$, then
\begin{equation}\label{eq:productinequality}
    \left(\prod_{i=0}^{n-1} (A_i+B_i)\right)^{1/n}
\ge
\left(\prod_{i=0}^{n-1} A_i\right)^{1/n}
+
\left(\prod_{i=0}^{n-1} B_i\right)^{1/n}.
\end{equation}
\end{lemma}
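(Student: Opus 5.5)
This is the superadditivity of the geometric mean, a form of Minkowski's inequality for products. I would reduce it to the weighted AM--GM inequality.

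First dispose of the degenerate case. If $\prod_{i}(A_i+B_i)=0$, then some index has $A_i+B_i=0$. Since $A_i,B_i\ge 0$, that forces $A_i=B_i=0$, so both products on the right-hand side of \eqref{eq:productinequality} vanish and the inequality reads $0\ge 0$. So we may assume $A_i+B_i>0$ for every $i$.

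Now divide \eqref{eq:productinequality} by $\left(\prod_i (A_i+B_i)\right)^{1/n}$. The claim becomes
\[
\left(\prod_{i=0}^{n-1}\frac{A_i}{A_i+B_i}\right)^{1/n}
+\left(\prod_{i=0}^{n-1}\frac{B_i}{A_i+B_i}\right)^{1/n}\le 1 .
\]
Set $a_i\triangleq A_i/(A_i+B_i)$ and $b_i\triangleq B_i/(A_i+B_i)$. Then $a_i,b_i\ge 0$ and $a_i+b_i=1$. By the unweighted AM--GM inequality (valid for nonnegative reals, zeros included),
\[
\left(\prod_i a_i\right)^{1/n}\le \frac1n\sum_i a_i
\quad\text{and}\quad
\left(\prod_i b_i\right)^{1/n}\le \frac1n\sum_i b_i .
\]
Adding these gives $\frac1n\sum_i(a_i+b_i)=1$, which is the claim. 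Multiplying back by the positive factor $\left(\prod_i(A_i+B_i)\right)^{1/n}$ yields \eqref{eq:productinequality}.

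There is no real obstacle here. The only point needing care is the normalization step, which requires $A_i+B_i>0$; that is why the zero case is handled separately first. One could alternatively argue by concavity of $\bldx\mapsto(\prod_i x_i)^{1/n}$ on the nonnegative orthant together with its homogeneity of degree $1$, which gives superadditivity directly. The AM--GM route above is more elementary and self-contained.
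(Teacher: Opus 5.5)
Your proof is correct and is essentially the paper's argument: normalize by $A_i+B_i$, apply AM--GM to each of the two normalized products, and add to get at most $1$. The only difference is the degenerate case: you exclude $A_i+B_i=0$ and let AM--GM absorb zero entries, whereas the paper first sets aside any zero $A_i$ or $B_i$ and then assumes all entries are strictly positive.
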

\begin{proof}
If some $A_i=0$ or some $B_i=0$, the claim follows immediately. 
Thus we may assume that $A_i>0$ and $B_i>0$ for all $i$.

From AM-GM inequality,
we obtain 
\begin{equation}\label{eq:Ahalf}
    \frac{1}{n}\sum_{i=0}^{n-1}\frac{A_i}{A_i+B_i}
\ge
    \left(\prod_{i=0}^{n-1}\frac{A_i}{A_i+B_i}\right)^{\frac{1}{n}},
\end{equation}
and
\begin{equation}\label{eq:Bhalf}
    \frac{1}{n}\sum_{i=0}^{n-1}\frac{B_i}{A_i+B_i}
\ge
    \left(\prod_{i=0}^{n-1}\frac{B_i}{A_i+B_i}\right)^{\frac{1}{n}}.
\end{equation}
Adding Equation~\eqref{eq:Ahalf} and \eqref{eq:Bhalf} gives
\[
    1
\ge
    \left(\prod_{i=0}^{n-1}\frac{A_i}{A_i+B_i}\right)^{\frac{1}{n}}
    +\left(\prod_{i=0}^{n-1}\frac{B_i}{A_i+B_i}\right)^{\frac{1}{n}},
\]
i.e.,
\[
\left(\prod_{i=0}^{n-1} (A_i+B_i)\right)^{1/n}
\ge
\left(\prod_{i=0}^{n-1} A_i\right)^{1/n}
+
\left(\prod_{i=0}^{n-1} B_i\right)^{1/n}.
\]
\end{proof}

\begin{lemma}
   For all real numbers $\alpha,\beta,\gamma,a,b$,  we have
\begin{equation}\label{eq:sinid}
    \sin(\alpha+\beta)\sin(\beta+\gamma)
=
\sin \alpha\sin \gamma+
\sin \beta\sin(\alpha+\beta+\gamma),
\end{equation}
and 
\begin{equation}\label{eq:elementray}
    \sin^2 a-\sin(a-b)\sin(a+b)=\sin^2b.
\end{equation}
\end{lemma}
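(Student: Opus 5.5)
Both identities are elementary consequences of the product-to-sum formula
\[
\sin u\sin v=\tfrac12\bigl(\cos(u-v)-\cos(u+v)\bigr),
\]
so the plan is to expand each side with this formula and compare.

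For \eqref{eq:sinid}, I expand the left-hand side:
\[
\sin(\alpha+\beta)\sin(\beta+\gamma)=\tfrac12\bigl(\cos(\alpha-\gamma)-\cos(\alpha+2\beta+\gamma)\bigr).
\]
The two terms on the right-hand side expand as
\[
\sin\alpha\sin\gamma=\tfrac12\bigl(\cos(\alpha-\gamma)-\cos(\alpha+\gamma)\bigr)
\]
and
\[
\sin\beta\sin(\alpha+\beta+\gamma)=\tfrac12\bigl(\cos(\alpha+\gamma)-\cos(\alpha+2\beta+\gamma)\bigr).
\]
Adding them, the $\cos(\alpha+\gamma)$ terms cancel, and what remains is exactly the left-hand side. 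This is a Ptolemy-type identity, but the direct expansion suffices.

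For \eqref{eq:elementray}, I again apply product-to-sum, which gives $\sin(a-b)\sin(a+b)=\tfrac12(\cos 2b-\cos 2a)$. Writing $\sin^2a=\tfrac12(1-\cos 2a)$, the difference becomes
\[
\sin^2a-\sin(a-b)\sin(a+b)=\tfrac12(1-\cos 2b)=\sin^2 b.
\]
Alternatively, \eqref{eq:elementray} is the special case of \eqref{eq:sinid} obtained by setting $\alpha=a-b$, $\beta=b$, $\gamma=a-b$, after which one uses $\sin(-x)=-\sin x$ and rearranges. I will present the direct computation, since it is shorter.

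There is no genuine obstacle here. The only point that needs care is keeping the signs straight in the product-to-sum expansions, in particular making sure the cancelling middle term $\cos(\alpha+\gamma)$ appears with opposite signs in the two expansions.
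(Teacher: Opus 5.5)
Your direct computation is correct and is essentially the paper's proof: the paper also expands both sides of \eqref{eq:sinid} by product-to-sum (subtracting the two right-hand expansions where you add them), and it proves \eqref{eq:elementray} by the same $\cos 2a$, $\cos 2b$ computation. One small correction to your aside: the substitution $\alpha=\gamma=a-b$, $\beta=b$ yields $\sin^2 a=\sin^2(a-b)+\sin b\,\sin(2a-b)$, not \eqref{eq:elementray}; the specialization that works is $\alpha=\gamma=b$, $\beta=a-b$, but since your proof does not rely on it, the proof stands.
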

\begin{proof}
    By the product-to-sum formula,
\[
\sin(\alpha+\beta)\sin(\beta+\gamma)
=
\frac{\cos(\alpha-\gamma)-\cos(\alpha+2\beta+\gamma)}{2}
\]
and 
\[
\sin \beta\sin(\alpha+\beta+\gamma)
=
\frac{\cos(\alpha+\gamma)-\cos(\alpha+2\beta+\gamma)}{2}.
\]
Subtracting the second identity from the first gives
\[
\sin(\alpha+\beta)\sin(\beta+\gamma)-\sin \beta\sin(\alpha+\beta+\gamma)
=
\frac{\cos(\alpha-\gamma)-\cos(\alpha+\gamma)}{2}.
\]
Since
\[
\cos(\alpha-\gamma)-\cos(\alpha+\gamma)=2\sin \alpha\sin \gamma,
\]
we obtain \eqref{eq:sinid}.

For the second identity, using
\[
\sin(a-b)\sin(a+b)
=
\frac{\cos(2b)-\cos(2a)}{2},
\]
we get
\[
\sin^2 a-\sin(a-b)\sin(a+b)
=
\frac{1-\cos(2a)}{2}
-
\frac{\cos(2b)-\cos(2a)}{2}
=
\frac{1-\cos(2b)}{2}
=
\sin^2 b.
\]
This proves \eqref{eq:elementray}.
\end{proof}

\begin{proof}[Proof of Theorem~\ref{thm:core}]
For $0\le p\le n$, define
\[
D_p\triangleq\left(\prod_{i=0}^{n-1} \sin\cX_i^{(p)}\right)^{1/n}.
\]
Thus $D_0=0,
D_n=0,$
because $\sin\cX_i^{(0)}=0$ and $\sin\cX_i^{(n)}=\sin\pi=0$.

Fix $1\le p\le n-1$. For each cyclic index $i$, set
\[
\alpha\triangleq x_i,
\qquad
\beta\triangleq\cX_{i+1}^{(p-1)},
\qquad
\gamma\triangleq x_{i+p}.
\]
Then
\[
\alpha+\beta=\cX_i^{(p)},
\qquad
\beta+\gamma=\cX_{i+1}^{(p)},
\qquad
\alpha+\beta+\gamma=\cX_i^{(p+1)}.
\]
Applying \eqref{eq:sinid} gives
\begin{equation}\label{eq:recur}
    \sin\cX_i^{(p)}\sin\cX_{i+1}^{(p)}
=
\sin x_i\sin x_{i+p}
+
\sin\cX_{i+1}^{(p-1)}\sin\cX_i^{(p+1)}.
\end{equation}

Define
\[
A_i\triangleq\sin x_i\sin x_{i+p},
\qquad
B_i\triangleq\sin\cX_{i+1}^{(p-1)}\sin\cX_i^{(p+1)}.
\]
Then \eqref{eq:recur} says
\[
\sin\cX_i^{(p)}\sin\cX_{i+1}^{(p)}=A_i+B_i.
\]
Multiplying over $i=0,\ldots,n-1$, we get
\[
\prod_{i=0}^{n-1} \sin\cX_i^{(p)}\sin\cX_{i+1}^{(p)}
=
\left(\prod_{i=0}^{n-1} \sin\cX_i^{(p)}\right)^2
=D_p^{2n}.
\]
By Lemma~\ref{lem:minkowskitypeinequality},
\begin{equation}\label{eq:afterproductinequality}
    D_p^2
=
\left(\prod_{i=0}^{n-1}(A_i+B_i)\right)^{1/n}
\ge
\left(\prod_{i=0}^{n-1}A_i\right)^{1/n}
+
\left(\prod_{i=0}^{n-1}B_i\right)^{1/n}.
\end{equation}
Now
\[
\prod_{i=0}^{n-1}A_i
=
\prod_{i=0}^{n-1}\sin x_i\sin x_{i+p}
=
\left(\prod_{i=0}^{n-1}\sin x_i\right)^2
=D_1^{2n},
\]
Similarly,
\[
\prod_{i=0}^{n-1}B_i
=
\prod_{i=0}^{n-1}\sin\cX_{i+1}^{(p-1)}\sin\cX_i^{(p+1)}
=D_{p-1}^nD_{p+1}^n.
\]
Substituting these two identities into \eqref{eq:afterproductinequality} yields
\begin{equation}\label{eq:finalrecur}
    D_p^2\ge D_1^2+D_{p-1}D_{p+1}
\qquad (1\le p\le n-1).
\end{equation}

Since $0<x_i<\pi$, we have $D_1>0$. Define
\[
X_p\triangleq\frac{D_p}{D_1}
\qquad (0\le p\le n).
\]
Dividing \eqref{eq:finalrecur} by $D_1^2$ gives
\begin{equation}\label{eq:normalizedrecur}
    X_p^2\ge 1+X_{p-1}X_{p+1}
\qquad (1\le p\le n-1).
\end{equation}
Also $X_0=0,X_n=0,X_1=1.$
Finally, since $\cX_i^{(n-1)}=\pi-x_{i-1}$,
we have $\sin\cX_i^{(n-1)}=\sin x_{i-1}.$
Therefore $D_{n-1}=D_1$ and $X_{n-1}=1$.

Set
\[
b\triangleq\frac{\pi}{n},
\qquad
S_p\triangleq\frac{\sin(pb)}{\sin b}
\qquad (1\le p\le n-1).
\]
Then $S_0=0$, $S_n=0$, $S_1=1$ and $S_{n-1}=1$.
Apply Equation~\eqref{eq:elementray}
with $a=pb$ gives
\[
\sin^2(pb)-\sin((p-1)b)\sin((p+1)b)=\sin^2 b.
\]
Dividing by $\sin^2 b$ gives
\begin{equation}\label{eq:recurofS}
    S_p^2=1+S_{p-1}S_{p+1}
\qquad (1\le p\le n-1).
\end{equation}

We claim that
\begin{equation}\label{eq:compareofXandS}
    X_p\ge S_p
\qquad
(1\le p\le n-1).
\end{equation}
Assume, toward a contradiction, that \eqref{eq:compareofXandS} fails. Since $S_p>0$ for $1\le p\le n-1$, define
\[
q\triangleq\min_{1\le p\le n-1}\frac{X_p}{S_p}.
\]

The failure of \eqref{eq:compareofXandS} means $q<1$.
Choose $p$ such that $\frac{X_p}{S_p}=q$.
By the definition of $q$,
we obtain
\begin{align}\label{inequality:lowerboundofX_i}
    X_{p-1}\ge qS_{p-1},
\qquad
X_{p+1}\ge qS_{p+1},
\qquad
X_p=qS_p.
\end{align}
Hence,
we derive
\begin{align}
    q^2S_p^2
\overset{\eqref{eq:normalizedrecur}}{\ge}
    1+X_{p-1}X_{p+1}
\overset{\eqref{inequality:lowerboundofX_i}}{\ge}
    1+q^2S_{p-1}S_{p+1}
\overset{\eqref{eq:recurofS}}{=}
    1+q^2(S_p^2-1)\nonumber
\end{align}
i.e., $q^2\ge1$, contradicting $0\le q<1$. Hence the claim \eqref{eq:compareofXandS} is proved.

    And Equation~\eqref{eq:compareofXandS} yields
    \[
        \left(\prod_{i=0}^{n-1}\frac{\sin\cX_i^{(p)}}{\sin x_i}\right)^{\frac{1}{n}}
    =
        \frac{D_p}{D_1}
    =
        X_p
    \ge 
        S_p
    =
        \frac{\sin(p\pi/n)}{\sin(\pi/n)}.
    \]
\end{proof}
	
\section{Upper bounds of higher redundancy}\label{sec:upperbound}
We now study upper bounds on $\Gamma_2(\mathcal{C})$ for analog codes with redundancy $r\ge 2$. We first prove an existence result showing that there is a real linear $[n,\ge n-r]$ code $\mathcal{C}$ satisfying $ \Gamma_2(\mathcal{C}) \le O\left(n^{1+\frac{1}{r-1}}\right).$ We then give an explicit construction attaining the same exponent, albeit with a larger constant. By Proposition~\ref{prop:coh}, deriving upper bounds for larger redundancy reduces to constructing $n$ unit vectors in $\mathbb{R}^r$ with small coherence.

Let $d\eqdef r-1$. The unit sphere in $\R^r$ is
\[
\bbS^d\eqdef\{\bldx\in \R^{d+1}:\norm{\bldx}=1\}.
\]
Denote
\[
|\bbS^d|\triangleq\frac{2\pi^{(d+1)/2}}{\Gamma((d+1)/2)}
\]
for the surface area of the unit sphere $\bbS^d$, 
and denote 
\begin{equation}\label{eq:cd}
    c_d\triangleq
\left(\frac{d|\bbS^d|}{|\bbS^{d-1}|}\right)^{1/d}.
\end{equation}
For $\bldx \in \bbS^d$ and angular radius $0<\alpha \le \pi$, the spherical cap of radius $\alpha$ centered at $\bldx$ is
\[
\text{Cap}(\bldx,\alpha) = \{\bldy \in \bbS^d : \arccos \ip{\bldx}{\bldy} \le \alpha, \bldx\in\bbS^d \}.
\]
We denote the area of $\text{Cap}(\bldx,\alpha)$ by $|\text{Cap}(\bldx,\alpha)|$.
Then, for any $\bldx\in\mathbb{S}^d$, we have \[ |\operatorname{Cap}(\bldx,\alpha)| = |\mathbb{S}^{d-1}| \int_0^\alpha \sin^{d-1}\theta\,d\theta; \] see, e.g., \cite{Li2011HypersphericalCap}.

\begin{lemma}\label{lem:projective-packing}
Let $d\ge 1$. 
Then, for every sufficiently large integer $n$, there exist points
\[
\bldx_0,\ldots,\bldx_{n-1}\in \bbS^d,
\]
satisfying
\begin{equation}\label{eq:sep}
\arccos\langle\bldx_i,\bldx_j\rangle
\ge c_d n^{-1/d},
\qquad i\ne j.
\end{equation}
\end{lemma}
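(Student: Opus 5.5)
The plan is a greedy (maximal packing) argument combined with a volume comparison of spherical caps. Set $\theta\eqdef c_d n^{-1/d}$. For $n$ sufficiently large we have $0<\theta\le\pi$, so the cap-area formula quoted just before the lemma applies.

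Consider any subset $P\subseteq\bbS^d$ that is $\theta$-separated, meaning $\arccos\ip{\bldx}{\bldy}\ge\theta$ for all distinct $\bldx,\bldy\in P$.
\begin{enumerate}
\item Such a set is finite. The caps $\operatorname{Cap}(\bldx,\theta/2)$ for $\bldx\in P$ have pairwise disjoint interiors, since the angular distance is a metric on $\bbS^d$. Each has the same positive area, and their total area cannot exceed $|\bbS^d|$.
\item Hence there is a maximal $\theta$-separated set $P=\{\bldx_0,\ldots,\bldx_{N-1}\}$. One can build it greedily, adding points until none can be added; the process stops by the bound in step 1.
\item By maximality, every $\bldy\in\bbS^d$ satisfies $\arccos\ip{\bldy}{\bldx_i}<\theta$ for some $i$, since otherwise $\bldy$ could be added to $P$. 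Therefore $\bbS^d\subseteq\bigcup_{i}\operatorname{Cap}(\bldx_i,\theta)$, and so $N\,|\operatorname{Cap}(\cdot,\theta)|\ge|\bbS^d|$.
\end{enumerate}

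Next I bound the cap area from above using $\sin t\le t$:
\[
|\operatorname{Cap}(\bldx,\theta)|=|\bbS^{d-1}|\int_0^\theta\sin^{d-1}t\,dt\le |\bbS^{d-1}|\frac{\theta^d}{d}.
\]
Combining this with the covering inequality gives
\[
N\ge \frac{d\,|\bbS^d|}{|\bbS^{d-1}|\,\theta^d}=\frac{c_d^d}{\theta^d}=n,
\]
where the last two equalities use the definition \eqref{eq:cd} of $c_d$ and the choice of $\theta$. Choosing any $n$ of the $N$ points $\bldx_i$ then yields \eqref{eq:sep}.

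This is the one step that matters: the constant $c_d$ is calibrated exactly so that the crude bound $\sin t\le t$ gives $N\ge n$ with no loss. The only other things to check are routine. First, $\theta\le\pi$, which is what "sufficiently large $n$" is for. Second, the inequality in the covering step points the right way, which it does because maximality gives strict $<\theta$ and the caps are closed. For $d=1$ the integrand $\sin^{0}t$ is $1$, and the same computation holds with $|\bbS^0|=2$.
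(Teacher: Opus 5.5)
Your proposal is correct and follows the paper's argument. The paper also takes a maximal $c_d n^{-1/d}$-separated set, uses maximality to get a covering by caps of that radius, and bounds the cap area via $\sin t\le t$ to obtain at least $n$ points; you additionally justify, via disjoint half-radius caps, that a maximal separated set exists and is finite, which the paper takes for granted.
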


\begin{proof} Let $\alpha=c_d n^{-1/d}$.
For all sufficiently large $n$, we have $0<\alpha\le 1$.
Let
\[
\mathscr P=\{\bldx_0,\ldots,\bldx_{M-1}\}\subseteq \bbS^d
\]
be a maximal $\alpha$-separated set, meaning that $\arccos\langle\bldx_i,\bldx_j\rangle\ge \alpha$,
for all $i\ne j$, and no further point of $\bbS^d$ can be added while preserving this property.

By maximality, the spherical caps $\operatorname{Cap}(\bldx_i,\alpha)$, $0\le i\le M-1$, cover $\mathbb{S}^d$.
Hence, \begin{equation*} \begin{aligned} |\mathbb{S}^{d}| &\le \sum_{i=0}^{M-1} |\operatorname{Cap}(\bldx_i,\alpha)| = M|\mathbb{S}^{d-1}| \int_0^\alpha \sin^{d-1}\theta\,d\theta \\ &\le M|\mathbb{S}^{d-1}| \int_0^\alpha \theta^{d-1}\,d\theta = M\frac{|\mathbb{S}^{d-1}|}{d}\alpha^d . \end{aligned} \end{equation*}
Therefore,
\[
M\ge \frac{d|\bbS^d|}{|\bbS^{d-1}|\alpha^d}=n.
\]
\end{proof}

\begin{theorem}[Existence bound]\label{thm:existence-explicit}
Fix $r\ge 2$. For every sufficiently large $n$, there exists a real linear $[n,\ge n-r]$ code $\cC$ such that
\begin{equation}\label{eq:existence-explicit}
\Gamma_2(\cC)
\le
\frac{2n}{\sin\left(c_{r-1}n^{-1/(r-1)}\right)}
\le
\frac{\pi}{c_{r-1}}\,n^{1+\frac{1}{r-1}},
\end{equation}
where $c_{r-1}$ is defined as in \eqref{eq:cd},
and only depends on $r$.
\end{theorem}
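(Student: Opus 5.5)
The plan is to combine Lemma~\ref{lem:projective-packing} with the coherence bound of Proposition~\ref{prop:coh}. Set $d=r-1$ and $\alpha=c_d n^{-1/d}$. The main issue is that coherence involves $|\ip{\bldh_i}{\bldh_j}|$, so antipodal or nearly antipodal points must also be excluded. Separation on $\bbS^d$ alone is therefore not enough.

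To handle this, I would apply Lemma~\ref{lem:projective-packing} with $2n$ points in place of $n$. Alternatively, I would redo its maximal-packing argument in projective space, pairing each cap with its antipode. This yields $n$ points that are pairwise $\alpha'$-separated as lines. Concretely, take a maximal set $\mathscr P$ with $\arccos|\ip{\bldx_i}{\bldx_j}|\ge\alpha$ for $i\ne j$. By maximality, the sets $\operatorname{Cap}(\bldx_i,\alpha)\cup\operatorname{Cap}(-\bldx_i,\alpha)$ cover $\bbS^d$. The same area estimate then gives $2M|\bbS^{d-1}|\alpha^d/d\ge|\bbS^d|$, i.e.\ $M\ge n/2$. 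This loses a constant factor, so to obtain exactly $c_d$ I would check whether the intended statement absorbs the factor $2^{1/d}$.

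Assume for now that we have $n$ unit vectors with $\arccos|\ip{\bldx_i}{\bldx_j}|\ge\alpha$ for $i\neq j$. Since $\alpha\le 1<\pi/2$ for large $n$, this gives $|\ip{\bldx_i}{\bldx_j}|\le\cos\alpha$, so $\rho(H)\le\cos\alpha<1$ for the $r\times n$ matrix $H=(\bldx_0\ \cdots\ \bldx_{n-1})$. Proposition~\ref{prop:coh} then gives $\Gamma_2(\ker H)\le 2n/\sqrt{1-\cos^2\alpha}=2n/\sin\alpha$. The code $\ker H$ has dimension at least $n-r$, which accounts for the ``$\ge n-r$'' in the statement.

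For the second inequality, use $\sin\alpha\ge 2\alpha/\pi$ on $[0,\pi/2]$. This gives
\[
\frac{2n}{\sin\alpha}\le\frac{\pi n}{\alpha}=\frac{\pi}{c_{r-1}}n^{1+1/(r-1)}.
\]

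The main obstacle is the antipodal issue: Lemma~\ref{lem:projective-packing} as stated controls only $\ip{\bldx_i}{\bldx_j}\le\cos\alpha$ and not $\ip{\bldx_i}{\bldx_j}\ge-\cos\alpha$. I would first try to restrict the maximal packing to a hemisphere-symmetric (projective) setting so as to keep the constant $c_d$. If the count only gives $M\ge n/2$, I would accept a constant-factor change, e.g.\ replacing $c_d$ by $c_d2^{-1/d}$, which does not affect the $O(n^{1+1/(r-1)})$ order.
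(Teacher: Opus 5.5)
Your diagnosis of the antipodal problem is correct, and it applies to the paper's own proof. The paper takes exactly the naive route: it takes the points from Lemma~\ref{lem:projective-packing} and asserts $\rho(H)\le\cos(c_{r-1}n^{-1/(r-1)})$. That step does not follow, because the lemma only gives $\ip{\bldh_i}{\bldh_j}\le\cos\alpha$. A maximal $\alpha$-separated set on $\bbS^d$ may even contain an antipodal pair $\bldx,-\bldx$ (start the greedy choice with both), which gives parallel columns and $\Gamma_2(\cC)=\infty$.

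Your repair by a maximal packing of lines, using the covering by $\operatorname{Cap}(\bldx_i,\alpha)\cup\operatorname{Cap}(-\bldx_i,\alpha)$, is sound. The rest of your chain is also fine: $\rho(H)\le\cos\alpha$, Proposition~\ref{prop:coh}, $\dim\ker H\ge n-r$, and $\sin x\ge 2x/\pi$. Your first option, applying the lemma with $2n$ points, does not help by itself. Those points again have uncontrolled near-antipodal pairs, and you give no step extracting $n$ separated lines from them.

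The genuine gap is the constant. At $\alpha=c_dn^{-1/d}$ your count gives only $M\ge n/2$. To get $n$ lines you must take $\alpha=c_d(2n)^{-1/d}$, so what you actually prove is
$$\Gamma_2(\cC)\le 2n/\sin\bigl(c_{r-1}(2n)^{-1/(r-1)}\bigr)\le 2^{1/(r-1)}\pi\, c_{r-1}^{-1}n^{1+1/(r-1)},$$
not \eqref{eq:existence-explicit}. This loss is intrinsic to the greedy/covering argument. Each line occupies two caps, and the slack in $\int_0^\alpha\sin^{d-1}\theta\,d\theta\le\alpha^d/d$ is only a factor $1-O(\alpha^2)$, so no sharpening of that count recovers $c_{r-1}$.

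Proving the statement as written needs a line configuration better than greedy. For $r=2$ this is easy: since $c_1=\pi$, the $n$ lines at angles $j\pi/n$ give $\rho(H)=\cos(\pi/n)=\cos(c_1n^{-1})$. For $r\ge 3$ you would need $n$ lines in $\R^r$ with pairwise angle at least $c_{r-1}n^{-1/(r-1)}$. That amounts to a packing of density above $2^{1-d}$, twice what the greedy argument guarantees. This is attainable asymptotically, for example by transplanting a Minkowski--Hlawka lattice packing, of density at least $\zeta(d)2^{1-d}$, to projective space. But it is a substantial extra ingredient that neither your proposal nor the paper supplies.

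Without it, the correct conclusion is the theorem with $c_{r-1}$ replaced by $2^{-1/(r-1)}c_{r-1}$. This still keeps the order $O(n^{1+1/(r-1)})$.
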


\begin{proof}
By Lemma~\ref{lem:projective-packing}, there are points $\bldh_0,\ldots,\bldh_{n-1}\in \bbS^{r-1}$
such that
\[
\arccos\langle\bldh_i,\bldh_j\rangle
\ge c_{r-1}n^{-1/(r-1)}
\]
for all $i\ne j$.
Let $H=(\bldh_0,\ldots,\bldh_{n-1})\in \R^{r\times n}$.
Let $\cC$ be the linear code over $\R$ with parity-check matrix $H$.
Then
\[
\rho(H)\le \cos\left(c_{r-1}n^{-1/(r-1)}\right).
\]
By \eqref{eq:coherence-bound},
\begin{equation*}\label{eq:sinfangsuo}
    \Gamma_2(\cC)
\le
\frac{2n}{\sqrt{1-\rho(H)^2}}
\le
\frac{2n}{\sin\left(c_{r-1}n^{-1/(r-1)}\right)}
\le
\frac{\pi}{c_{r-1}}n^{1+\frac{1}{r-1}}
.
\end{equation*}
The last inequality follows from the standard estimate  $\sin x\ge \frac{2}{\pi}x$ when $0<x\le \frac{\pi}{2}$. For all sufficiently large $n$, we have $ c_{r-1}n^{-1/(r-1)}\in (0,\pi/2], $ so the estimate applies. This completes the proof.
\end{proof}

We now give an explicit construction achieving the same exponent, although generally with a worse constant.
Our construction takes lattice points in a ball centered at the origin.

Specifically,
for $d\ge 1$, let
\[
\mathcal B_d(0,T)\triangleq\{\bldx\in\R^d:\|\bldx\|\le T\}
\]
and define
\[
N_d(T)\triangleq|\Z^d\cap \mathcal B_d(0,T)|.
\]
Let
\[
\kappa_d\triangleq|\mathcal B_d(0,1)|=\frac{\pi^{d/2}}{\Gamma(d/2+1)}.
\]
The following lemma describes how large $T$ needs to be
for $\mathcal B_d(0,T)$ to contain at least $n$ points of $\Z^d$.
\begin{lemma}
\label{lem:lattice-count}
For every $T\ge \sqrt d/2$,
\[
N_d(T)\ge \kappa_d\left(T-\frac{\sqrt d}{2}\right)^d.
\]
Consequently, for every $n\ge 1$, if
\[
T_n=\left\lceil \left(\frac{n}{\kappa_d}\right)^{1/d}+\frac{\sqrt d}{2}\right\rceil,
\]
then
\[
N_d(T_n)\ge n.
\]
\end{lemma}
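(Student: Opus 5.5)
The plan is the standard unit-cube covering argument. To each lattice point $\bldm\in\Z^d\cap\mathcal B_d(0,T)$ we attach the closed cube $Q_\bldm\triangleq\bldm+[-\tfrac12,\tfrac12]^d$. These cubes have unit volume and pairwise disjoint interiors, so
\[
N_d(T)=\vol\Big(\bigcup_{\bldm\in\Z^d\cap\mathcal B_d(0,T)}Q_\bldm\Big).
\]
It therefore suffices to show that this union contains the ball $\mathcal B_d(0,T-\sqrt d/2)$. The volume of that ball is $\kappa_d(T-\sqrt d/2)^d$, which is exactly the bound we want.

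For the containment, take any $\bldx\in\mathcal B_d(0,T-\sqrt d/2)$. Let $\bldm\in\Z^d$ be a nearest lattice point, obtained by rounding each coordinate, so that $\bldx\in Q_\bldm$. Every coordinate of $\bldx-\bldm$ has absolute value at most $\tfrac12$, hence $\|\bldx-\bldm\|\le\sqrt d/2$. By the triangle inequality, $\|\bldm\|\le\|\bldx\|+\sqrt d/2\le T$, so $\bldm\in\mathcal B_d(0,T)$ and $\bldx$ lies in the union. The hypothesis $T\ge\sqrt d/2$ is used only to make the radius $T-\sqrt d/2$ nonnegative, so that the ball is well defined; when this radius is $0$ the bound reads $N_d(T)\ge0$ and is trivial.

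For the consequence, set $T_n$ as in the statement. Since $T_n\ge(n/\kappa_d)^{1/d}+\sqrt d/2\ge\sqrt d/2$, the first part applies. Because $T_n-\sqrt d/2\ge(n/\kappa_d)^{1/d}$ and $t\mapsto\kappa_d t^d$ is increasing on $t\ge0$, we get
\[
N_d(T_n)\ge\kappa_d\big(T_n-\tfrac{\sqrt d}{2}\big)^d\ge\kappa_d\cdot\frac{n}{\kappa_d}=n.
\]
The ceiling is not needed for this inequality. It only makes $T_n$ an integer, which is convenient for the later construction.

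No step here is really an obstacle. The only point needing care is the measure-theoretic bookkeeping in the first step. The volume of the union equals the number of cubes because overlaps between cubes occur only on their boundaries, which have measure zero. The containment then gives the inequality by monotonicity of Lebesgue measure.
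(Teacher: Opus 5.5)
Your proof is correct and is the same argument as the paper's. You cover $\mathcal B_d(0,T-\sqrt d/2)$ by the unit cubes $Q_{\bldm}$ centered at lattice points of $\mathcal B_d(0,T)$ (rounding coordinates and applying the triangle inequality), compare volumes, and then substitute $T=T_n$. The only difference is cosmetic: you use disjoint interiors to get equality for the volume of the union, whereas the paper only needs subadditivity, $\vol(\bigcup Q_{\bldm})\le\sum\vol(Q_{\bldm})=N_d(T)$.
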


\begin{proof}
For each $\bldm\in\Z^d$, let
\[
Q_{\bldm}=\bldm+[-1/2,1/2]^d
\]
be the unit cube centered at $\bldm$. If $\bldx\in B_d(0,T-\sqrt d/2)$ and $\bldm\in\Z^d$ is a nearest lattice point to $\bldx$ in the $\infty$-norm sense, then $\bldx\in Q_{\bldm}$ and
\[
\|\bldm\|\le \|\bldx\|+\|\bldm-\bldx\|\le T-\frac{\sqrt d}{2}+\frac{\sqrt d}{2}=T.
\]
Therefore
\[
\mathcal B_d(0,T-\sqrt d/2)
\subseteq
\bigcup_{\bldm\in \Z^d\cap B_d(0,T)}Q_\bldm.
\]
Taking volumes gives
\[
\kappa_d\left(T-\frac{\sqrt d}{2}\right)^d
\le
\sum_{\bldm\in \Z^d\cap B_d(0,T)}\vol(Q_\bldm)
=N_d(T),
\]
since each cube has volume $1$. The second assertion follows by substituting $T=T_n$. 
\end{proof}

\begin{construction}[Ball-grid parity-check matrix]
\label{cons:ball-grid}
Fix $r\ge 2$ and set
\[
d=r-1,\quad
T_n=\left\lceil \left(\frac{n}{\kappa_d}\right)^{1/d}+\frac{\sqrt d}{2}\right\rceil.
\]
Choose $n$ distinct lattice points
\[
\bldm_0,\ldots,\bldm_{n-1}\in \Z^d\cap \mathcal B_d(0,T_n).
\]
For each $0\le i \le n-1$, define
\[
\bldy_i=\frac{\bldm_i}{T_n}\in \mathcal B_d(0,1)
\]
and
\[
\bldh_i=\frac{(1,\bldy_i)}{\sqrt{1+\|\bldy_i\|^2}}\in\Sph^{r-1}\subset\R^r.
\]
Let
\[
H=(\bldh_0,\ldots,\bldh_{n-1})\in\R^{r\times n},
\]
and 
let $\cC$ be the linear code over $\R$ with parity-check matrix $H$.
\end{construction}

\begin{theorem}
    Fix $r\ge 2$.
    For every $n\ge r$,
    Construction~1 gives a linear $[n,\ge n-r]$ code $\cC$ over $\R$ satisfying
    \[
    \Gamma_2(\cC)
\le
    4n\left\lceil
    \left(
    \frac{n}{\kappa_{r-1}}
    \right)^{\frac{1}{r-1}}+\frac{\sqrt{r-1}}{2
}
    \right\rceil
    =
    O\left(
    n^{1+\frac{1}{r-1}}
    \right).
    \]
\end{theorem}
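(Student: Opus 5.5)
The plan is to invoke Proposition~\ref{prop:coh}: it suffices to show that the columns $\bldh_i$ of Construction~\ref{cons:ball-grid} are distinct unit vectors with coherence satisfying
\[
\sqrt{1-\rho(H)^2}\ge \frac{1}{2T_n}.
\]
Then \eqref{eq:coherence-bound} gives $\Gamma_2(\cC)\le 2n\cdot 2T_n=4nT_n$, which is exactly the claimed bound. The $O(n^{1+1/(r-1)})$ statement follows because $T_n=O(n^{1/(r-1)})$ for fixed $r$.

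\textbf{Setup.} By Lemma~\ref{lem:lattice-count}, $N_d(T_n)\ge n$, so the $n$ distinct lattice points $\bldm_i$ exist. The vectors $\bldh_i$ are unit vectors by definition. They are pairwise distinct, and no two are parallel: the first coordinate of each is positive, so $\bldh_i=\pm\bldh_j$ forces $\bldy_i=\bldy_j$. Hence $\rho(H)<1$, and $\cC=\ker H$ has dimension at least $n-r$.

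\textbf{Key computation.} For $i\ne j$, write $\bldu=(1,\bldy_i)$ and $\bldv=(1,\bldy_j)$. Then $1-\ip{\bldh_i}{\bldh_j}^2=\sin^2\angle(\bldu,\bldv)$. By Lagrange's identity,
\[
\sin^2\angle(\bldu,\bldv)=\frac{\|\bldu\|^2\|\bldv\|^2-\ip{\bldu}{\bldv}^2}{\|\bldu\|^2\|\bldv\|^2}.
\]
The numerator equals $\sum_{k<l}(u_kv_l-u_lv_k)^2$. The terms with $k=0$ contribute $\sum_l (y_{j,l}-y_{i,l})^2=\|\bldy_i-\bldy_j\|^2$, and all other terms are nonnegative, so the numerator is at least $\|\bldy_i-\bldy_j\|^2$.

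Next, distinct lattice points satisfy $\|\bldm_i-\bldm_j\|\ge1$, which gives $\|\bldy_i-\bldy_j\|\ge 1/T_n$. Since $\|\bldy_i\|\le1$, each of $\|\bldu\|^2$ and $\|\bldv\|^2$ is at most $2$, so the denominator is at most $4$. Therefore
\[
\sin\angle(\bldu,\bldv)\ge\frac{1}{2T_n}
\]
for all $i\ne j$, and so $\sqrt{1-\rho(H)^2}\ge 1/(2T_n)$.

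\textbf{Conclusion and expected obstacle.} Substituting into Proposition~\ref{prop:coh} yields $\Gamma_2(\cC)\le 4nT_n$. For the asymptotics, $T_n\le (n/\kappa_{r-1})^{1/(r-1)}+\sqrt{r-1}/2+1$, which is $O(n^{1/(r-1)})$.

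The only step needing care is the lower bound on the sine of the angle. The Lagrange-identity estimate, which keeps only the $k=0$ cross terms, is the cleanest route and avoids any small-angle approximation. I expect no real obstacle beyond checking that the constants combine to exactly $4nT_n$.
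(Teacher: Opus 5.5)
Your proposal is correct and is essentially the paper's proof. The paper writes the numerator $(1+\|\bldy_i\|^2)(1+\|\bldy_j\|^2)-(1+\langle\bldy_i,\bldy_j\rangle)^2$ as $\|\bldy_i-\bldy_j\|^2+\bigl(\|\bldy_i\|^2\|\bldy_j\|^2-\langle\bldy_i,\bldy_j\rangle^2\bigr)$ and drops the second term by Cauchy--Schwarz; this is exactly your Lagrange-identity split into the $k=0$ terms and the rest, followed by the same denominator bound of $4$ and the same appeal to Proposition~\ref{prop:coh} to get $4nT_n$.
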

\begin{proof}
For two distinct indices $i\neq j$, 
we obtain
\[
\|\bldy_i-\bldy_j\|
=
\frac{\|\bldm_i-\bldm_j\|}{T_n}
\ge
\frac{1}{T_n}.
\]
Moreover,
\[
\langle \bldh_i,\bldh_j\rangle
=
\frac{1+\langle \bldy_i,\bldy_j\rangle}
{\sqrt{1+\|\bldy_i\|^2}\sqrt{1+\|\bldy_j\|^2}}.
\]
Hence
\[
\begin{aligned}
1-|\langle \bldh_i,\bldh_j\rangle|^2
&=
\frac{
(1+\|\bldy_i\|^2)(1+\|\bldy_j\|^2)
-
(1+\langle \bldy_i,\bldy_j\rangle)^2
}
{(1+\|\bldy_i\|^2)(1+\|\bldy_j\|^2)}  \\
&=
\frac{
\|\bldy_i-\bldy_j\|^2
+
\|\bldy_i\|^2\|\bldy_j\|^2-\langle \bldy_i,\bldy_j\rangle^2
}
{(1+\|\bldy_i\|^2)(1+\|\bldy_j\|^2)}  \\
&\ge
\frac{\|\bldy_i-\bldy_j\|^2}{4},
\end{aligned}
\]
where we used the Cauchy--Schwarz inequality and the fact that
$\|\bldy_i\|,\|\bldy_j\|\le 1$. Therefore,
\[
\sqrt{1-|\langle \bldh_i,\bldh_j\rangle|^2}
\ge
\frac{\|\bldy_i-\bldy_j\|}{2}
\ge
\frac{1}{2T_n}.
\]
Taking the minimum over all distinct pairs gives
\[
\sqrt{1-\rho(H)^2}\ge \frac{1}{2T_n}.
\]
By the bound in \eqref{eq:coherence-bound},
\[
\Gamma_2(\cC)\le
\frac{2n}{\sqrt{1-\rho(H)^2}}
\le
4nT_n
=
4n
\left\lceil
\left(\frac{n}{\kappa_{r-1}}\right)^{1/(r-1)}
+
\frac{\sqrt{r-1}}{2}
\right\rceil.
\]
\end{proof}

Figure~\ref{fig:comparison} compares the leading terms of the two upper bounds obtained from Theorem~\ref{thm:existence-explicit} and Construction~\ref{cons:ball-grid}. To make the comparison visually clearer, the curve corresponding to the existence upper bound is plotted after an appropriate rescaling. Figure~\ref{fig:r=2compare} compares the two upper bounds derived in this paper with Roth's construction in the case $r=2$, while Figure~\ref{fig:r=3compare} compares them with Song and Cai's construction in the case $r=3$.

\begin{figure}[H]
        \centering
        \includegraphics[width=0.5\linewidth]{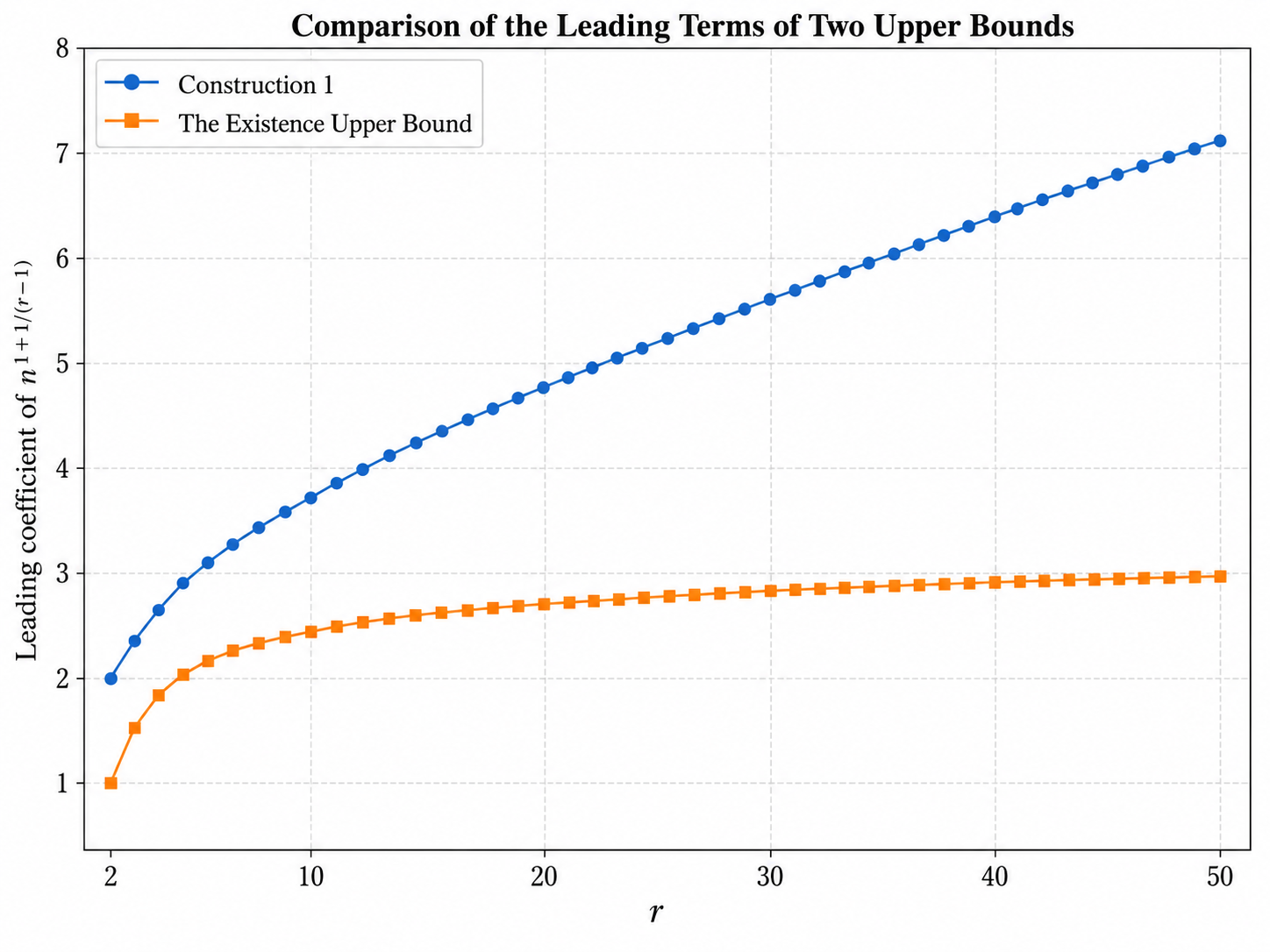}
                \centering\caption{Comparison of Leading Terms in the Two Upper Bounds}
                \label{fig:comparison}
            \end{figure}
    
    \begin{figure}[htbp]
	\centering
	\begin{minipage}{0.49\linewidth}
		\centering
		\includegraphics[width=0.9\linewidth]{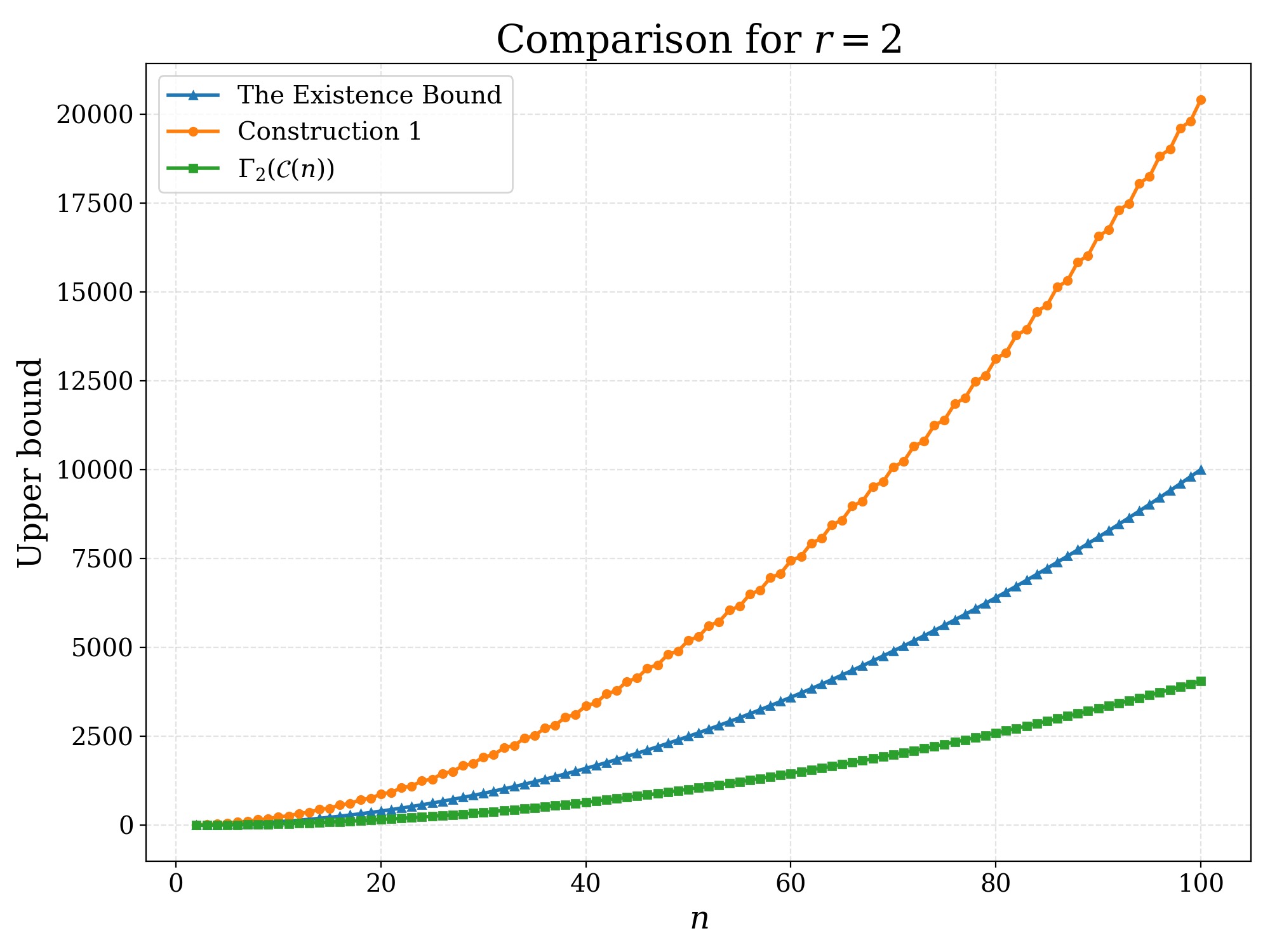}
		\caption{Comparison of Upper Bounds for $r=2$}
		\label{fig:r=2compare}
	\end{minipage}
	\begin{minipage}{0.49\linewidth}
		\centering
		\includegraphics[width=0.9\linewidth]{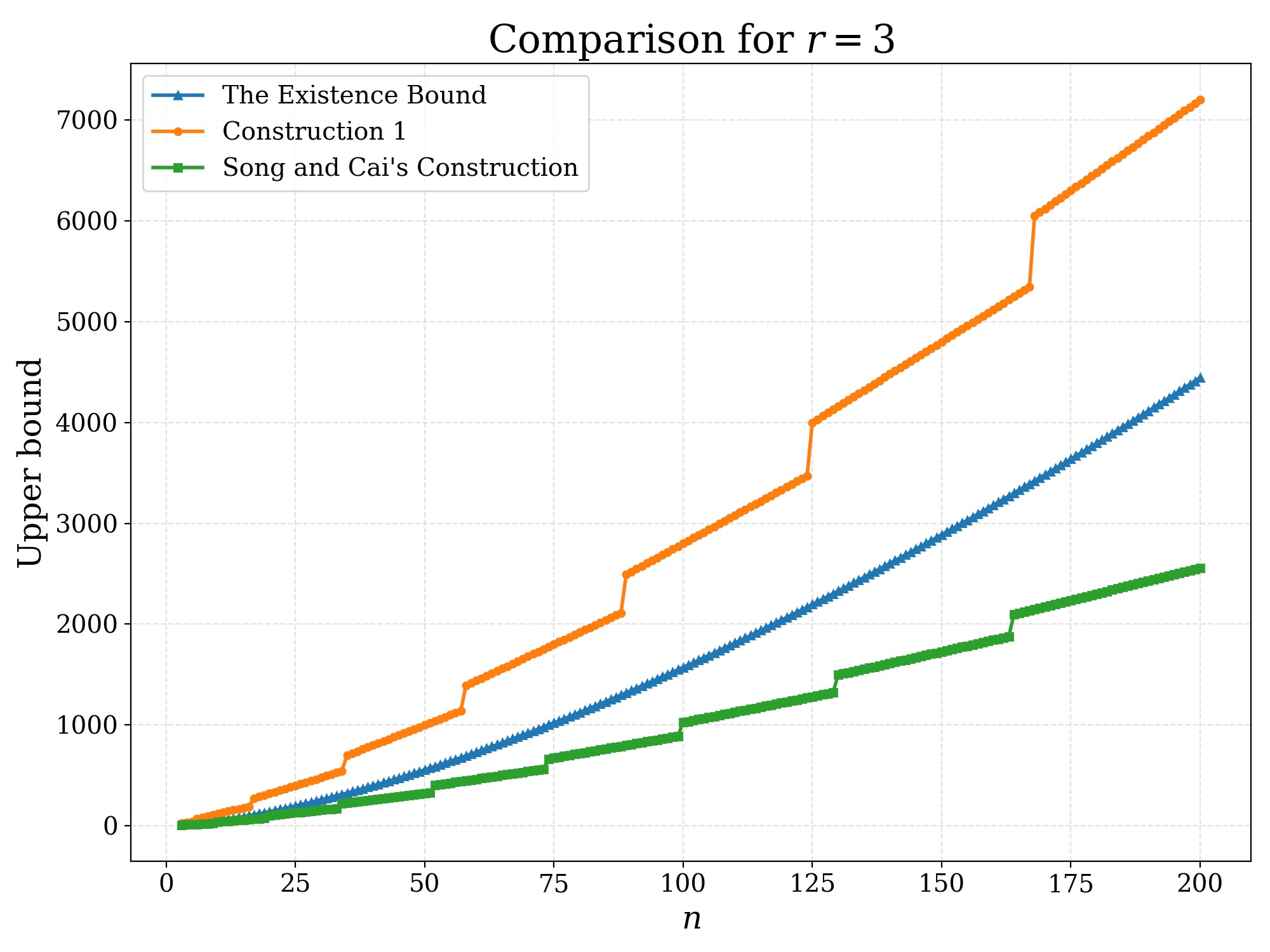}
		\caption{Comparison of Upper Bounds for $r=3$}
		\label{fig:r=3compare}
	\end{minipage}
\end{figure}
\FloatBarrier

\bibliographystyle{IEEEtranS}
\bibliography{allbib}
\end{document}